\newcommand{\defeq}{\vcentcolon=}
\newtheorem{theorem}{{\bf Theorem}}
\newtheorem{lemma}{{\bf Lemma}}
\newtheorem{remark}{{\bf Remark}}
\title{Accelerating Spectral Clustering on Quantum and Analog Platforms}
\author{
  Xingzi Xu \thanks{Work performed as an intern in the Applied Sciences Lab at SRI International.} \\
  Department of Electrical and Computer Engineering\\
  Duke University \\
  Durham, NC, USA\\
  \texttt{xingzi.xu@duke.edu} \\
   \And
  Tuhin Sahai \\
  SRI International \\
  Menlo Park, CA, USA\\
  \texttt{tuhin.sahai@sri.com} \\
}
\begin{document}
\maketitle

\begin{abstract}
We introduce a novel hybrid quantum-analog algorithm to perform graph clustering that exploits connections between the evolution of dynamical systems on graphs and the underlying graph spectra. This approach constitutes a new class of algorithms that combine emerging quantum and analog platforms to accelerate computations. Our hybrid algorithm is equivalent to spectral clustering and significantly reduces the computational complexity from $\mathcal{O}(N^3)$ to $\mathcal{O}(N)$, where $N$ is the number of nodes in the graph. We achieve this speedup by circumventing the need for explicit eigendecomposition of the normalized graph Laplacian matrix, which dominates the classical complexity, and instead leveraging quantum evolution of the Schr\"{o}dinger equation followed by efficient analog computation for the dynamic mode decomposition (DMD) step. Specifically, while classical spectral clustering requires $\mathcal{O}(N^3)$ operations to perform eigendecomposition, our method exploits the natural quantum evolution of states according to the graph Laplacian Hamiltonian in linear time, combined with the linear scaling for DMD that leverages efficient matrix-vector multiplications on analog hardware. We prove and demonstrate that this hybrid approach can extract the eigenvalues and scaled eigenvectors of the normalized graph Laplacian by evolving Schr\"{o}dinger dynamics on quantum computers followed by DMD computations on analog devices, providing a significant computational advantage for large-scale graph clustering problems. Our demonstrations can be reproduced using our code that has been released at  \url{https://github.com/XingziXu/quantum-analog-clustering.}
\end{abstract}

\section{Introduction}\label{sec:intro}

Graph clustering is a popular technique to identify and group densely connected subgraphs within a larger graph. It is a powerful decomposition approach that enables the analysis of interconnected systems for a variety of applications in a wide range of fields, such as social networks \cite{mishra2007clustering,liu2014weighted,bu2018gleam}, fraud detection \cite{sabau2012survey,pourhabibi2020fraud}, bioinformatics \cite{higham2007spectral}, uncertainty analysis in networked systems~\cite{surana2012iterative}, decomposition for scientific computation~\cite{klus2011efficient}, and transport networks \cite{ahmed2007cluster}. Consequently, the versatility of graph clustering makes it an invaluable tool in areas ranging from scientific discovery to industrial applications. Spectral clustering, a prevalent graph decomposition approach, is particularly advantageous in high-dimensional spaces where the geometric characteristics of the data are not readily apparent \cite{von2007tutorial}. Unlike traditional clustering methods, spectral clustering uses the properties of the eigenvalues and eigenvectors of the Laplacian of the underlying graph (or data) to find the optimal partitioning. This aspect enables spectral clustering to cluster points based on their interconnectedness rather than their raw distances from each other, typically resulting in better performance when dealing with data where the clusters are irregular, intertwined, or lie on a complex surface.

In this work, we propose a highly scalable quantum-analog hybrid algorithm based on the evolution of Schr\"{o}dinger dynamics and dynamic mode decomposition (DMD) to rapidly and accurately retrieve spectral clustering information. We evolve the underlying Schr\"{o}dinger dynamics on quantum devices~\cite{sahai2023spectral}, and perform the subsequent spectrum computations on analog machines. The approach has a scaling of $\mathcal{O}(N)$, where $N$ is the number of nodes in the graph. Since existing state-of-the-art classical methods scale as $\mathcal{O}(N^3)$, our approach provides a polynomial speed-up over traditional computing platforms. To our best knowledge, this is the first-of-a-kind algorithm that exploits the combination of quantum and analog computing platforms to extract a scaling superior to algorithms that run on a single computing platform.

In quantum computing, qubits, the fundamental elements of quantum information, evolve according to the time-dependent Schr\"{o}dinger's equation. Qubits differ from classical bits in that they can exist in a superposition of states, meaning that a qubit can be in states representing the traditional bits (0, 1) or a superposition of both. Quantum algorithms provide substantial reductions in computational complexity for specific problems such as integer factorization~\cite{shor1999polynomial}, database search~\cite{grover1996fast}, solving linear systems~\cite{harrow2009quantum}, simulating dynamical systems~\cite{surana2024efficient}, and matrix multiplication~\cite{li2021quantum} to name a few.
On the other hand, analog computing with resistive memory has a fast response speed and embeds parallelism, significantly accelerating matrix computation \cite{sun2022invited}, a core step in a wide variety of numerical techniques. Analog computing has garnered attention due to its potential to reduce the data movement bottleneck in traditional computing architectures. The essence of analog computing is performing computational operations directly at the data's location rather than moving data between memory and the central processing unit (CPU). Integrating crosspoint arrays with negative feedback amplifiers allows for addressing linear algebra challenges like solving linear systems and computing eigenvectors in a single step. Analog computing improves the exponential solving of linear systems and has a computational complexity of $\mathcal{O}(1)$ for matrix-vector multiplication \cite{sun2022invited,sun2020time}. 

Despite the advantages of spectral clustering, it typically scales as $\mathcal{O}(N^3)$ (where $N$ is the number of nodes in the graph), limiting its applications in large graphs \cite{spectralScott}. We show that spectral clustering has an $\mathcal{O}(N)$ complexity on a hybrid quantum-analog platform. 

We organize the paper as follows: Section \ref{sec:spectral} provides a brief overview of classical spectral clustering. Section \ref{sec:dmd} delves into the classical decentralized algorithm for spectral clustering that leverages the wave equation and Dynamic Mode Decomposition (DMD). In sections~\ref{sec:wave} and~\ref{sec:eigen}, we discuss our novel approach that uses quantum evolution of wave dynamics and singular value- and eigen- decompositions on analog computers. We also provide details on efficiently solving the linear system of equations for computing eigenvectors on quantum or analog computers in section~\ref{sec:linear}. We present experimental results in section \ref{sec:result} and wrap up with conclusions in section \ref{sec:conclusion}.

\section{Analog matrix computing (AMC) circuits}
\label{sec:analog}
Conventional computers have significantly improved in speed and efficiency over the past decades, as captured by Moore's law, stating that the number of transistors in an integrated circuit doubles about every two years\cite{moore}. However, the industry is approaching a limit where it is increasingly difficult to continue with appreciable reductions in the size of transistors~\cite{moore_quantum}. Additionally, due to the high computational complexity of matrix-vector multiplication required for the training and inference of artificial intelligence models on digital platforms, there is an urgent need to develop alternative computing devices with more attractive scaling properties for these computations. Due to the inherent architectural parallelism for matrix operations, analog matrix computing (AMC) circuits based on crosspoint resistive memory arrays or photonics provide a desirable alternative \cite{sun2022invited}. We note that, historically, analog computers have been used extensively during the Apollo space program to simulate the trajectory of spacecraft. They fell out of vogue during the digital revolution, but have recently been becoming increasingly popular for a wide variety of applications.


\begin{figure}[hbt!]
  \centering
  \includegraphics[width=0.4\textwidth, trim=20pt 20pt 100pt 15pt]{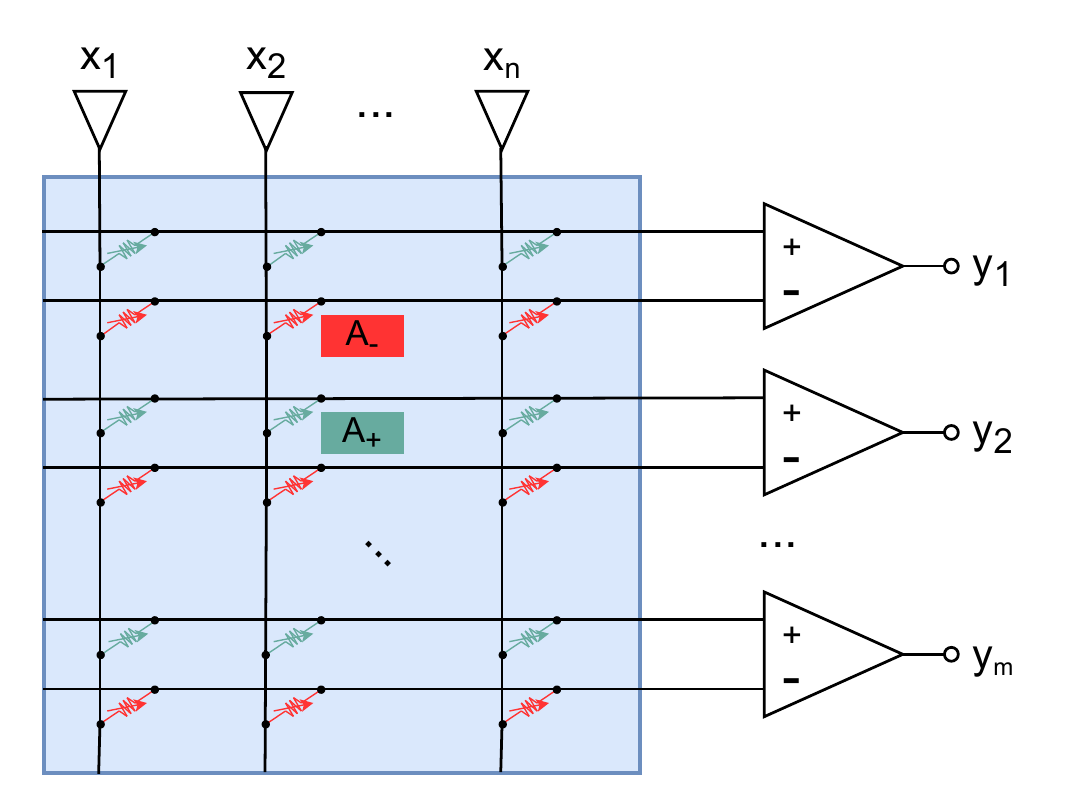}
  \caption{Schematic of AMC circuits for matrix-vector multiplication between matrix $\mathbf{A}$ and vector $\mathbf{x}$, where $\mathbf{A}=\mathbf{A}_+-\mathbf{A}_-$\cite{sun2022invited}. $\mathbf{A}\in\mathcal{R}^{M\times N}$, $\mathbf{x}\in\mathcal{R}^{N\times 1}$, $\mathbf{y}\in\mathcal{R}^{M\times 1}$.}
  \label{fig:amc}
\end{figure}

For example, the multiplication between an $M\times N$ matrix and an $N\times 1$ vector, has the computational complexity of $\mathcal{O}(MN)$ on digital computers and $\mathcal{O}(1)$ on AMC circuits~\cite{sun2022invited} or photonic devices~\cite{xu2022high}. Figure \ref{fig:amc} shows the schematic of a prototypical AMC circuit for computing the matrix-vector multiplication (MVM) between a matrix $\mathbf{A}$ and a vector $\mathbf{x}$. The conductance in the circuits captures the entries of $\mathbf{A}$, and the applied voltages encode the values of $\mathbf{x}$. The output currents provide the result of $\mathbf{y}=\mathbf{Ax}$, due to Ohm's law \cite{ohm}. Since conductance can only take positive values, matrices with negative entries can be split as $\mathbf{A}=\mathbf{A}_+-\mathbf{A}_-$, where $\mathbf{A}_+$ is $\mathbf{A}_-$ are both positive. The results are then calculated as $\mathbf{A}\mathbf{x}=\mathbf{A}_+\mathbf{x}-\mathbf{A}_-\mathbf{x}$. Unlike digital computers, analog computers compute the MVM results for all entries in parallel after the setting of conductances and application of voltages.
To quantify the error and computational time of MVM on analog computers, we prove the following theorem using circuit dynamics.


\begin{theorem}
Let the minimum time for convergence up to an of error $\epsilon$ (where $\epsilon$ is the difference between the computed and ground truth solutions) of matrix-vector multiplication between an $N\times N$ matrix $\mathbf{A}$ and an $N\times 1$ vector $\mathbf{x}$ on circuit-based analog computers be $T$. Then $T$ is given by,
\begin{align}
    T = \displaystyle\max_{j}\frac{1}{\beta_j}\ln(\frac{\mathbf{x}_{0}\beta - \alpha_j}{\epsilon}),
\end{align}
\label{thm:analog_stb}
where $\mathbf{x}_0$ is the initial condition, $\alpha_j=\vartheta\omega\frac{\sum_{i}\mathbf{A}_{ji}\mathbf{y}_{i}}{1+\sum_{i}\mathbf{A}_{ji}}$, $\beta=\frac{\vartheta\omega}{1+\sum_{i}\mathbf{A}_{ji}}$, and $j$ is the index that takes values $\{1,2,\hdots,N\}$. 
\end{theorem}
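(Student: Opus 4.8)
The plan is to obtain the formula from the transient response of the AMC circuit, viewing the convergence of matrix-vector multiplication as an exponential relaxation toward equilibrium. First I would write the node equations at the $N$ output terminals using Kirchhoff's current law together with Ohm's law. At output node $j$, the crosspoint conductances inject a forcing current, the feedback amplifier (with transconductance--bandwidth factor $\vartheta\omega$) sinks a current proportional to the node variable, and the parasitic line/gate capacitance supplies the dynamics. Because each transimpedance stage holds its summing node at a virtual ground, the $N$ output channels do not couple through the columns of $\mathbf{A}$, so the circuit is described by $N$ decoupled scalar ODEs of the form
\begin{equation}
\dot{\mathbf{v}}_j(t) = -\beta_j\,\mathbf{v}_j(t) + \alpha_j,
\qquad
\beta_j = \frac{\vartheta\omega}{1+\sum_i \mathbf{A}_{ji}},
\qquad
\alpha_j = \vartheta\omega\,\frac{\sum_i \mathbf{A}_{ji}\mathbf{y}_i}{1+\sum_i \mathbf{A}_{ji}}.
\end{equation}
The main effort---and the step I expect to be the chief obstacle---is exactly this modeling step: placing the conductances and feedback correctly, locating the parasitic capacitance, and verifying that the relaxation rate and forcing come out to be precisely $\beta_j$ and $\alpha_j$ as stated (in particular that the loading term $1+\sum_i \mathbf{A}_{ji}$ results from combining the feedback conductance with the crosspoint conductances of row $j$).

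With the scalar ODE in hand, I would solve it under the prescribed initial condition $\mathbf{v}_j(0)=\mathbf{x}_{0}$, giving
\begin{equation}
\mathbf{v}_j(t) = \frac{\alpha_j}{\beta_j} + \Bigl(\mathbf{x}_{0} - \frac{\alpha_j}{\beta_j}\Bigr)\,e^{-\beta_j t}.
\end{equation}
Sending $t\to\infty$ identifies the equilibrium $\mathbf{v}_j^{\star} = \alpha_j/\beta_j$, and I would verify that this coincides with the $j$-th entry of the desired product (up to the circuit's fixed scaling), so that the steady state of the array is indeed the matrix-vector product.

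Finally I would quantify convergence. The deviation from equilibrium at node $j$, namely $|\mathbf{v}_j(t)-\mathbf{v}_j^{\star}| = |\mathbf{x}_{0}-\alpha_j/\beta_j|\,e^{-\beta_j t}$, decays monotonically. Setting it equal to the tolerance $\epsilon$---measured in the circuit's native output variable, so that the factor $\beta_j$ in $\mathbf{x}_{0}\beta_j-\alpha_j$ sits inside the logarithm---and solving for $t$ yields the per-channel settling time $T_j = \tfrac{1}{\beta_j}\ln\!\bigl(\tfrac{\mathbf{x}_{0}\beta_j-\alpha_j}{\epsilon}\bigr)$. Since the multiplication result is only usable once every output channel has settled, the minimum computation time is $T=\max_j T_j$, which is precisely the claimed expression. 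The remaining points---monotonicity of the decay and well-posedness of the linear ODE---are routine.
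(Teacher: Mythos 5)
Your proposal is correct and follows essentially the same route as the paper: decoupled first-order relaxation at each output node, explicit solution of the scalar ODE, steady state $\alpha_j/\beta_j$, an exponential error bound set equal to $\epsilon$, and the maximum over $j$. The only difference is that the paper simply quotes the governing circuit dynamics $\dot{\mathbf{x}} = \vartheta\omega\,\mathcal{U}(\mathbf{A}\mathbf{y}-\mathbf{x})$ from the cited circuit-analysis literature rather than re-deriving it from Kirchhoff/feedback modeling as you propose, and your handling of where $\beta_j$ lands inside the logarithm mirrors the paper's own treatment.
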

\begin{proof}
Given the evolution dynamics of analog circuits is given by~\cite{sun2021circuit},
\begin{align}
    \frac{d\mathbf{x}(t)}{dt} = \vartheta\omega \mathcal{U}(\mathbf{A}\mathbf{y} - \mathbf{x}(t)),
\label{eq:analog_dyn}
\end{align}
where $\vartheta$ is the DC loop gain, $\omega$ is the 3-dB bandwidth, $\mathbf{y}$ is an input vector, $\mathbf{A}$ is the conductance matrix, and 
\begin{align}
  \mathcal{U} =  \begin{pmatrix}
\frac{1}{1+\sum_{i}\mathbf{A}_{1i}} & 0   & \cdots & 0   \\
0   & \frac{1}{1+\sum_{i}\mathbf{A}_{2i}} & \cdots & 0   \\
\vdots & \vdots & \ddots & \vdots \\
0   & 0   & \cdots & \frac{1}{1+\sum_{i}\mathbf{A}_{ni}} \\
\end{pmatrix}.
\end{align}
By expanding the above equations, it is easy to show that the $j$-th equation is given by,
\begin{align}
     \frac{d\mathbf{x}_j(t)}{dt} = \alpha_j - \beta_{j}\mathbf{x}_j(t),
\label{eq:jeq}
\end{align}
where $\alpha_j$ and $\beta$ are constants given by,

\begin{align}
    \alpha_j&=\vartheta\omega\frac{\sum_{i}\mathbf{A}_{ji}\mathbf{y}_{i}}{1+\sum_{i}\mathbf{A}_{ji}},\nonumber\\
    \beta_j &= \frac{\vartheta\omega}{1+\sum_{i}\mathbf{A}_{ji}}.
\end{align}
Note that Eqns.~\ref{eq:jeq} are all decoupled and their solution at time $t$ is given by,
\begin{align}
 \mathbf{x}(t) = \frac{(\mathbf{x}(0)\beta_{j} - \alpha_{j})e^{-\beta_{j}t} + \alpha_{j}}{\beta_j}. 
\end{align}

Now, let the solution as $t\rightarrow\infty$ be $\mathbf{x}_f = \frac{\alpha_j}{\beta_j}$. If we consider a time $T_j$
 such that the error $||\mathbf{x}(t)-\mathbf{x}_f||$ is less than equal to our threshold $\epsilon$ we get,
 
\begin{align}
    \frac{(\mathbf{x}(0)\beta_{j} - \alpha_{j})e^{-\beta_{j}t}}{\beta_j} \leq \epsilon,
\end{align}
or,
\begin{align}
  T_j \geq \frac{1}{\beta_j}\ln(\frac{\mathbf{x}_{0}\beta - \alpha_j}{\epsilon}).
\end{align}
It now follows that the minimum time for convergence of the system (as indicated by the error reducing below a threshold of $\epsilon$) is given by maximum over all $j\in\{1,2,\hdots,N\}$,
\begin{align}
    T = \displaystyle\max_{j}\frac{1}{\beta_j}\ln(\frac{\mathbf{x}_{0}\beta - \alpha_j}{\epsilon}).
\end{align}
\end{proof}
We note that the above derivation not only provides performance bounds on the time to convergence, but also enables the design of efficient analog circuits. Moreover, the above derivation will provide the wall clock time for  executing various steps within our algorithm (listed in table~\ref{table:steps}) on analog platforms. 

\section{Preliminaries}
In this section, we will summarize our previously constructed classical spectral clustering algorithm that exploits wave dynamics. This algorithm will subsequently be adopted for quantum and analog platforms. Before we get into the details spectral clustering and our approach, we will provide a summary of existing methods for clustering.
\subsection{Overview of various methods for graph clustering}
The overall goal of partitioning graphs is to separate nodes into groups or clusters such that there are ``weak'' interactions between clusters and strong interactions within the groups. Numerous methods for partitioning graphs have been developed over the years. A summary of some of the approaches and their drawbacks can be found in table~\ref{tab:graph_clustering}. Several of the methods require additional information about the graph. We note that algorithms for modularity maximization are popular for clustering large graphs and have been used extensively for community detection. As shown in the table, the Louvain method has a scaling of $\mathcal{O}(N \log N)$, which is significantly better than the $\mathcal{O}(N^3)$ scaling of spectral clustering. Note that modularity maximization has a spectral formulation~\cite{newman2013spectral}. Consequently, our hybrid analog-quantum algorithm can easily adapt to the community detection (modularity maximization) scenario. We can achieve this by replacing the graph Laplacian in our approach with the matrix that arises in the modularity maximization formulation.

\begin{table}
\centering
\caption{Various Graph Clustering Methods}
\label{tab:graph_clustering}
\begin{tabular}{llll}
\toprule
                                   Method &                                  Approach &                                  Complexity &                        Drawbacks \\
\midrule
                      Spectral Clustering &          Eigen-decomposition &                  $\mathcal{O}(N^3)$  & Poor scalability \\
Louvain method &         Modularity maximization &                                 $\mathcal{O}(N \log N)$ &                Requires resolution parameter\\
                  Label Propagation  &                 Label diffusion &                                        $\mathcal{O}(E)$ &  Needs label data \\
                  Markov Clustering (MCL) &         Random walks  &         $\mathcal{O}(N^3)$  &                            Unbalanced cuts \\
                  Hierarchical Clustering &      Agglomerative or divisive &              $\mathcal{O}(N^2)$  &               Assumes hierarchical structure \\
\bottomrule
\end{tabular}
\end{table}

\subsection{Short introduction spectral clustering}
\label{sec:spectral}
Consider a graph denoted by $\mathcal{G}=(V,E)$ that consists of a node set $V={1,\dots,N}$ and an edge set $E\subseteq V\times V$. Each edge $(i,j)\in E$ in this graph is associated with a weight $\mathbf{W}_{ij}>0$, and $\mathbf{W}$ represents the $N\times N$ weighted adjacency matrix of $\mathcal{G}$. In this context, $\mathbf{W}_{ij}=0$ only if $(i,j)\not\in E$.
The normalized graph Laplacian, represented by $\mathbf{L}$, is defined as follows:
\begin{align}
\label{eqn:glap}
\mathbf{L}_{ij}=\begin{cases}
    1, &\text{if } i=j,\\
    -\mathbf{W_{ij}}/\sum_{l=1}^N\mathbf{W}_{il}, &\text{if } (i,j)\in E,\\
    0, &\text{otherwise,}
\end{cases}\end{align}
An equivalent representation of the normalized graph Laplacian is $\mathbf{L}=\mathbf{I}-\mathbf{D}^{-1}\mathbf{W}$, where $\mathbf{D}$ is a diagonal matrix that consists of the sums of the rows of $\mathbf{W}$. We focus solely on undirected graphs, for which the Laplacian is symmetric, and the eigenvalues are real numbers. The eigenvalues of $\mathbf{L}$ can be arranged in ascending order as $0=\lambda_1\leq\lambda_2\leq\dots\leq\lambda_N$. Each eigenvalue corresponds to an eigenvector, denoted as $\mathbf{v}^{(1)},\mathbf{v}^{(2)},\dots,\mathbf{v}^{(N)}$, where $\mathbf{v}^{(1)}=\mathbf{1}=[1,1,\dots,1]^T$ \cite{von2007tutorial}. For our study, we assume $\lambda_1<\lambda_2$, which means that the graph does not contain disconnected clusters. Spectral clustering partitions the graph $\mathcal{G}$ into two clusters by employing the signs of the entries of the second eigenvector $\mathbf{v}^{(2)}$. $k-$means clustering on the signs of entries of higher eigenvectors enables one to partition graphs into more than two clusters \cite{von2007tutorial}. The spectral gap in the eigenvalues of the graph Laplacians typically determines the number of clusters.

\subsection{Wave equation-based classical clustering method}
\label{sec:dmd}
As in \cite{sahai2012hearing,zhu2022dynamic}, we consider the wave equation given by,
\begin{align}
\label{eqn:wave}
    \frac{\partial^2 \mathbf{u}}{\partial t^2} = c^2 \Delta \mathbf{u}.
\end{align}
The discrete form of the wave equation on a graph is given by,
\begin{align}
\label{eqn:wave_discrete}
    \mathbf{u}_i(t) = 2\mathbf{u}_i(t-1)-\mathbf{u}_i(t-2)-c^2\sum_{j\in\mathcal{N}(i)} \mathbf{L}_{ij}\mathbf{u}_j(t-1),
\end{align}
where $\mathcal{N}(i)$ is the set of neighbors of node $i$ including the node $i$ itself \cite{Friedman2004WaveEF}. To update $\mathbf{u}_i$, one only needs the previous value of $\mathbf{u}_j$ at the neighboring nodes and the connecting edge weights. 

Given the initial condition $\frac{\mathrm{d}\mathbf{u}}{\mathrm{d}t}|_{t=0}=0$ and $0<c<\sqrt{2}$, the solution of the wave equation~\ref{eqn:wave_discrete} can be written as
\begin{align}
\label{eqn:wave_dyn}
    \mathbf{u}(t) = \sum_{j=1}^N \mathbf{u}(0)^T\mathbf{v}^{(j)}(p_j e^{it\zeta_j}+q_je^{-it\zeta_j})\mathbf{v}^{(j)},
\end{align}
where $p_j=(1+i\tan(\zeta_j/2))/2$, $q_j=(1-i\tan(\zeta_j/2))/2$ \cite{zhu2022dynamic}. Consequently, computing the eigenvectors of $\mathbf{L}$ is transformed into the computation of the coefficients of the frequencies of the wave dynamics in equation~\ref{eqn:wave_dyn}. 

To extract the above coefficients, we use Dynamic Mode Decomposition (DMD), a powerful tool for analyzing the dynamics of nonlinear systems. We interpret DMD as an approximation of Koopman modes~\cite{2013OnDM}. We apply DMD to one-dimensional time series data with time-delay embedding by constructing the matrix $\mathbf{X}$ and $\mathbf{Y}$ for the exact DMD matrix $\mathcal{A}=\mathbf{Y}\mathbf{X}^+$ (where $^+$ denotes the pseudoinverse) \cite{dylewsky2022principal}. When DMD with time delay embedding is implemented on one-dimensional signals of the form,
\begin{align} 
\label{eqn:1dsig}
    u(t)=\sum_{j=1}^J a_j e^{i\zeta_j t}, 
\end{align} 
where $\zeta_j\in(-\pi,\pi)$, $j=1,2,\dots,J$ are unique frequencies, it can successfully extract the coefficients $a_j$ \cite{zhu2022dynamic,dylewsky2022principal}. 

We form the following matrices with the one-dimensional signal $u(t)$,
\begin{align}
\label{eqn:X}
    \mathbf{X}&\defeq\begin{bmatrix}
    u(0) & u(1) & \dots & u(M-1)\\
    u(1) & u(2) & \dots & u(M)\\
    \vdots & \vdots & \ddots & \vdots\\
    u(K-1) & u(K) & \dots & u(K+M-2)\\
\end{bmatrix},\\
&= \begin{bmatrix}
    \mathbf{x}^l(0) & \mathbf{x}^l(1) & \dots & \mathbf{x}^l(M-1)
\end{bmatrix},\\
\label{eqn:Y}
\mathbf{Y}&\defeq\begin{bmatrix}
    \mathbf{x}^l(1) & \dots & \mathbf{x}^l(M-1) & \mathbf{x}^l(M)
\end{bmatrix},
\end{align}
where $\mathbf{x}(t)\defeq[u(t),u(t+1),\dots,u(t+(M-1))]^T$. Let $H>1$, also define
\begin{align}
\label{eqn:Phi}
    \Phi_H &\defeq \begin{bmatrix}
        1 & 1 & \dots & 1\\
        e^{i\zeta_1} & e^{i\zeta_2} & \dots & e^{i\zeta_N}\\
        \vdots & \vdots & \ddots & \vdots\\
        e^{i(H-1)\zeta_1} & e^{i(H-1)\zeta_2} & \dots & e^{i(H-1)\zeta_N}\\
    \end{bmatrix},\\
    &=[\phi_1,\phi_2,\dots,\phi_N],
\end{align}
where $\phi_j=[1,e^{i\zeta_j},\dots,e^{i(H-1)\zeta_j}]^T$. \cite{zhu2022dynamic} proved the following two lemmas to connect the above matrices and the associated DMD computations to the spectra of the graph Laplacian.
\begin{lemma}
\label{lemma:dmd}
    For one-dimensional signal $u(t)$ defined by equation \eqref{eqn:1dsig}, if $K\geq J$ and $M\geq J$ of the matrices $\mathbf{X}$ and $\mathbf{Y}$ defined by \eqref{eqn:X} and \eqref{eqn:Y} respectively, the eigenvalues of $\mathcal{A}=\mathbf{Y}\mathbf{X}^+$ are $\{e^{i\zeta_j}\}_{j=1}^J$ and the columns of $\Phi_K$ defined by \eqref{eqn:Phi} are the corresponding eigenvectors \cite{zhu2022dynamic}.
\end{lemma}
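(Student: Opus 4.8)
\emph{Proof idea.} The plan is to write each Hankel matrix as an explicit Vandermonde factorization and read off the spectrum of $\mathcal{A}=\mathbf{Y}\mathbf{X}^+$ directly. Set $z_j\defeq e^{i\zeta_j}$, so that $u(t)=\sum_{j=1}^J a_j z_j^t$. Comparing with \eqref{eqn:X} and \eqref{eqn:Y}, the $(r,c)$ entries (for $r\in\{0,\dots,K-1\}$, $c\in\{0,\dots,M-1\}$) satisfy $\mathbf{X}_{rc}=u(r+c)=\sum_j a_j z_j^r z_j^c$ and $\mathbf{Y}_{rc}=u(r+c+1)=\sum_j a_j z_j\,z_j^r z_j^c$. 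Hence, writing $A\defeq\mathrm{diag}(a_1,\dots,a_J)$, $Z\defeq\mathrm{diag}(z_1,\dots,z_J)$, letting $\Phi_K$ be the $K\times J$ Vandermonde matrix of \eqref{eqn:Phi} with columns $\phi_j=[1,e^{i\zeta_j},\dots,e^{i(K-1)\zeta_j}]^T$, and introducing the analogous $M\times J$ matrix $\Psi_M$ with entries $(\Psi_M)_{cj}=z_j^c$, one obtains
\begin{align}
\mathbf{X}=\Phi_K\,A\,\Psi_M^T, \qquad \mathbf{Y}=\Phi_K\,A\,Z\,\Psi_M^T.
\end{align}

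Next I would record the rank bookkeeping. Since the frequencies $\zeta_j\in(-\pi,\pi)$ are distinct, the $z_j$ are distinct, so any Vandermonde matrix built from them has full column rank; with $K\ge J$ and $M\ge J$ this forces $\mathrm{rank}(\Phi_K)=\mathrm{rank}(\Psi_M)=J$. Assuming $a_j\ne 0$ for all $j$ (otherwise that mode is absent and $J$ may be reduced), $A$ is invertible, so $\Phi_K A$ has full column rank $J$, $\Psi_M^T$ has full row rank $J$, and $\mathrm{rank}(\mathbf{X})=J$; thus $\mathbf{X}=(\Phi_K A)(\Psi_M^T)$ is a \emph{full-rank factorization}. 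The standard pseudoinverse identity for full-rank factorizations then gives $\mathbf{X}^+=(\Psi_M^T)^+(\Phi_K A)^+$, and I would combine this with the elementary facts $\Psi_M^T(\Psi_M^T)^+=\mathbf{I}_J$, $(\Phi_K A)^+=A^{-1}\Phi_K^+$, and $\Phi_K^+\Phi_K=\mathbf{I}_J$.

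Substituting into $\mathcal{A}=\mathbf{Y}\mathbf{X}^+$ and using that diagonal matrices commute,
\begin{align}
\mathcal{A}=\Phi_K A Z\,\Psi_M^T(\Psi_M^T)^+(\Phi_K A)^+=\Phi_K A Z A^{-1}\Phi_K^+=\Phi_K Z\Phi_K^+.
\end{align}
Right-multiplying by $\Phi_K$ and invoking $\Phi_K^+\Phi_K=\mathbf{I}_J$ yields $\mathcal{A}\Phi_K=\Phi_K Z$, i.e.\ $\mathcal{A}\phi_j=e^{i\zeta_j}\phi_j$ for every $j$; since the $\phi_j$ are linearly independent, this already exhibits $\{e^{i\zeta_j}\}_{j=1}^J$ as eigenvalues with the columns of $\Phi_K$ as eigenvectors. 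To confirm there are no other nonzero eigenvalues, I would write $\mathcal{A}=\Phi_K\,(Z\Phi_K^+)$ and use that $BC$ and $CB$ share the same nonzero spectrum: $(Z\Phi_K^+)\Phi_K=Z$, whose eigenvalues are precisely $\{e^{i\zeta_j}\}_{j=1}^J$, so the remaining $K-J$ eigenvalues of the $K\times K$ matrix $\mathcal{A}$ vanish.

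I expect the only delicate point to be the pseudoinverse algebra: one must verify that $(\Phi_K A)(\Psi_M^T)$ is genuinely a full-rank factorization — this is precisely where the hypotheses $K\ge J$, $M\ge J$, the distinctness of the $\zeta_j$, and $a_j\ne 0$ all get used — together with the simplifications $\Psi_M^T(\Psi_M^T)^+=\mathbf{I}_J$ and $(\Phi_K A)^+=A^{-1}\Phi_K^+$. The only other thing worth flagging is the $K-J$ spurious zero eigenvalues present when $K>J$, so that \emph{the eigenvalues of $\mathcal{A}$ are $\{e^{i\zeta_j}\}$} should be understood as a statement about its nonzero eigenvalues.
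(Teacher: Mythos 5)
Your argument is correct. Note that this paper does not actually prove Lemma~\ref{lemma:dmd}; it is quoted with a citation to \cite{zhu2022dynamic}, so there is no in-paper proof to compare against, and your write-up has to stand on its own. It does: the Vandermonde factorizations $\mathbf{X}=\Phi_K A\Psi_M^T$ and $\mathbf{Y}=\Phi_K A Z\Psi_M^T$ are exactly right for these Hankel matrices, the hypotheses $K\geq J$, $M\geq J$, distinctness of the $\zeta_j$, and $a_j\neq 0$ are used precisely where they are needed (full-rank factorization, hence $\mathbf{X}^+=(\Psi_M^T)^+(\Phi_K A)^+$, $\Psi_M^T(\Psi_M^T)^+=\mathbf{I}_J$, $(\Phi_K A)^+=A^{-1}\Phi_K^+$, $\Phi_K^+\Phi_K=\mathbf{I}_J$), and the reduction $\mathcal{A}=\Phi_K Z\Phi_K^+$ together with the $BC$/$CB$ trick cleanly settles both the eigenvector claim and the fact that the only extra eigenvalues are the $K-J$ zeros. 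This is essentially the standard route (and, to the best of my knowledge, the same mechanism as in the cited source). Two small points worth making explicit if you polish this: the assumption $a_j\neq 0$ is implicit in the lemma's statement that the signal has exactly $J$ modes, so stating it as you do is appropriate rather than a restriction; and the paper's $\Phi_H$ in \eqref{eqn:Phi} is written with $N$ columns, whereas in the lemma it should be read with $J$ columns (your $\phi_1,\dots,\phi_J$), a notational mismatch in the paper rather than in your proof.
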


\begin{lemma}
\label{lemma:wave}
    At any node $l$, the DMD computations on matrices $\mathbf{X}(u_l)$, $\mathbf{Y}(u_l)$ using local snapshots $\mathbf{u}_l(0), \mathbf{u}_l(1),\dots, \mathbf{u}_l(4N-1)$ defined by Eqn.~\ref{eqn:wave_dyn} with $K=M=2N$ yields exact eigenvalues of the Laplacian and the corresponding eigenvectors (scaled) \cite{zhu2022dynamic}.
\end{lemma}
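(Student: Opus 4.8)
The plan is to reduce the claim to Lemma \ref{lemma:dmd}: I would show that the scalar signal $u_l(t)$ recorded at a fixed node $l$ is exactly a finite sum of complex exponentials of the form \eqref{eqn:1dsig}, whose frequencies encode the Laplacian eigenvalues and whose amplitudes encode the (scaled) entries of the Laplacian eigenvectors, and then apply the DMD recovery guarantee. First I would pin down the dispersion relation between the temporal frequencies $\zeta_j$ in \eqref{eqn:wave_dyn} and the spectrum of $\mathbf{L}$: substituting the modal ansatz $\mathbf{u}(t)=e^{it\zeta}\mathbf{v}^{(j)}$ into \eqref{eqn:wave_discrete} and using $\mathbf{L}\mathbf{v}^{(j)}=\lambda_j\mathbf{v}^{(j)}$ gives $(e^{i\zeta}-1)^2=-c^2e^{i\zeta}\lambda_j$, i.e. $\cos\zeta_j=1-c^2\lambda_j/2$. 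Since the normalized Laplacian obeys $0=\lambda_1<\lambda_2\le\cdots\le\lambda_N\le 2$ and $0<c<\sqrt 2$, this forces $\cos\zeta_j\in(-1,1]$, so $\zeta_1=0$ and $\zeta_j\in(0,\pi)$ for $j\ge2$; every frequency thus lies strictly inside $(-\pi,\pi)$, and $\lambda_j=2(1-\cos\zeta_j)/c^2$ is recovered from $\zeta_j$.

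Next I would read off the $l$-th coordinate of \eqref{eqn:wave_dyn}, writing
\begin{align}
    u_l(t)=\sum_{j=1}^{N}\left(\mathbf{u}(0)^{T}\mathbf{v}^{(j)}\right)\mathbf{v}^{(j)}_l\left(p_je^{it\zeta_j}+q_je^{-it\zeta_j}\right)=\sum_{k=1}^{J}a_k^{(l)}e^{it\eta_k},
\end{align}
where the distinct frequencies are $\{\eta_k\}=\{0\}\cup\{\pm\zeta_j:j\ge 2\}$ — the $\lambda_1=0$ mode contributes only the single frequency $0$ because $\zeta_1=0$ makes $p_1=q_1=\tfrac12$ — so $J\le 2N-1$, and each amplitude $a_k^{(l)}$ equals $(\mathbf{u}(0)^{T}\mathbf{v}^{(j)})p_j\,\mathbf{v}^{(j)}_l$ or $(\mathbf{u}(0)^{T}\mathbf{v}^{(j)})q_j\,\mathbf{v}^{(j)}_l$ for the matching $j$, with the scalar prefactor independent of $l$. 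This is precisely the form \eqref{eqn:1dsig}, with unique frequencies inside $(-\pi,\pi)$.

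Finally, forming $\mathbf{X}(u_l),\mathbf{Y}(u_l)$ from \eqref{eqn:X}--\eqref{eqn:Y} with $K=M=2N$ uses exactly the snapshots $u_l(0),\dots,u_l(4N-1)$, and since $K=M=2N\ge 2N-1\ge J$, Lemma \ref{lemma:dmd} applies: the eigenvalues of $\mathcal{A}=\mathbf{Y}\mathbf{X}^{+}$ are exactly $\{e^{i\eta_k}\}$, so each $\eta_k$ and hence each $\lambda_j=2(1-\cos\zeta_j)/c^2$ is obtained exactly. For the eigenvectors, I would recover the amplitudes $a_k^{(l)}$ from the DMD modes by solving the Vandermonde system $\mathbf{x}^l(0)=\Phi_{M}\,\mathbf{a}^{(l)}$ (full column rank since the $\eta_k$ are distinct and $M\ge J$); because the frequency set, and hence the mode labelling, is identical at every node, stacking the amplitude at frequency $\zeta_j$ over $l=1,\dots,N$ yields $(\mathbf{u}(0)^{T}\mathbf{v}^{(j)})p_j\,\mathbf{v}^{(j)}$, i.e. $\mathbf{v}^{(j)}$ up to a node-independent scalar — the claimed scaled eigenvector.

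The main obstacle is the bookkeeping of the frequency count and the degenerate cases rather than any analytic estimate: one must check that $\zeta_j$ never reaches $\pm\pi$ (so that $\zeta_j$ and $-\zeta_j$ are genuinely distinct modulo $2\pi$ and Lemma \ref{lemma:dmd}'s hypothesis $\eta_k\in(-\pi,\pi)$ holds), which is exactly where $0<c<\sqrt 2$ together with $\lambda_j\le 2$ is used; that the $\lambda_1=0$ mode collapses its $\pm$ pair to a single frequency, keeping $J\le 2N$; and — for the eigenvectors to be fully recoverable — that the initial condition is generic, $\mathbf{u}(0)^{T}\mathbf{v}^{(j)}\neq0$ for every $j$ (if $\mathbf{L}$ has repeated eigenvalues one recovers an arbitrary basis of each eigenspace, scaled, rather than a canonical one). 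Granting these, the statement follows directly from Lemma \ref{lemma:dmd}.
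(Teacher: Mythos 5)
The paper itself gives no proof of this lemma --- it is imported directly from \cite{zhu2022dynamic}, with Lemma~\ref{lemma:dmd} quoted alongside as the supporting ingredient --- so the comparison is with the cited source rather than with anything in the text. Your reconstruction is correct and follows essentially that standard route: the dispersion relation $\cos\zeta_j=1-c^2\lambda_j/2$ (with $0\le\lambda_j\le2$ and $0<c<\sqrt2$ keeping every $\zeta_j$ in $[0,\pi)$, so $\pm\zeta_j$ stay inside $(-\pi,\pi)$ and $\lambda_j$ is recoverable from $\zeta_j$), the observation that $u_l(t)$ is a sum of at most $2N-1$ distinct exponentials of the form \eqref{eqn:1dsig} whose amplitudes are $\mathbf{v}^{(j)}_l$ times a node-independent (complex) prefactor, and then Lemma~\ref{lemma:dmd} with $K=M=2N\ge J$ plus the Vandermonde solve for the amplitudes, exactly as in steps 5--6 of Algorithm~\ref{alg:dmd}. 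The fine print you flag --- genericity of $\mathbf{u}(0)$ (enforced in Algorithm~\ref{alg:wave_cluster} by random initial conditions, and also needed nodewise since a vanishing component $\mathbf{v}^{(j)}_l=0$ simply drops that frequency at node $l$), the collapse of the $\lambda_1=0$ mode to a single frequency, and repeated eigenvalues yielding only some vector in the eigenspace --- is the right set of caveats and is implicit in the cited statement as well.
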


Based on lemma~\ref{lemma:wave}, \cite{zhu2022dynamic} developed a DMD-based algorithm for distributed spectral clustering, given in algorithms~\ref{alg:dmd}, \ref{alg:wave_cluster}.

\begin{algorithm}
\caption{DMD$(\mathbf{X}, \mathbf{Y})$: For computing eigenvalues and eigenvector components at node $i$ \cite{zhu2022dynamic}.}\label{alg:dmd}
\begin{algorithmic}[1]
   \State \text{Compute reduced SVD of} $\mathbf{X}$\text{, i.e., }$\mathbf{X}=\mathbf{U}\mathbf{\Sigma}\mathbf{V}^*$.
   \State \text{Define the matrix } $\mathcal{\Tilde{A}}\equiv \mathbf{U}^*\mathbf{Y}\mathbf{V}\mathbf{\Sigma}^{-1}$
   \State \text{Compute eigenvalues/vectors $\mu$ and $\xi$ of $\mathcal{\Tilde{A}}$, i.e., $\mathcal{\Tilde{A}}\xi=\mu\xi$. Nonzero eigenvalues $\mu$ are DMD eigenvalues.}
   \State \text{The DMD mode corresponding to $\mu$ is then given by $\hat{\phi}\equiv \frac{1}{\mu}\mathbf{YV\Sigma}^{-1}\xi$.}
   \State \parbox[t]{\linewidth}{%
    Compute $\hat{\mathbf{a}}$ by solving the linear system $\hat{\mathbf{\Phi}}\hat{\mathbf{a}}=\mathbf{x}(0)$, where the columns of $\hat{\mathbf{\Phi}}$ are the eigenvectors sorted in decreasing order based on the real part of the eigenvalues.}
   \State $a_i^{(j)}=\hat{\phi}_{j,1}\hat{a}_j, j=1,2,\dots,k.$
\end{algorithmic}
\end{algorithm}

\begin{algorithm}
\caption{Wave equation based graph clustering \cite{zhu2022dynamic}.}\label{alg:wave_cluster}
\begin{algorithmic}[1]
    \State $\mathbf{u}_i\leftarrow\mathrm{Random}([0,1])$
    \State $\mathbf{u}_i(-1)\leftarrow\mathbf{u}_i(0)$
    \State $t\leftarrow 1$
    \While $~t<T_{\mathrm{max}}$
    \State $\mathbf{u}_i(t)\leftarrow 2\mathbf{u}_i(t-1)-\mathbf{u}_i(t-2)-c^2\sum_{j\in\mathcal{N}(i)}\mathbf{L}_{ij}\mathbf{u}_j(t-1)$
    \State $t\leftarrow t+1$
    \EndWhile
    \State \text{Create the matrices $\mathbf{X}_i, \mathbf{Y}_i\in\mathbb{R}^{K\times M}$ defined by Eqns.~\ref{eqn:X} and \ref{eqn:Y} at node $i$, }\\
    \text{using $\mathbf{u}_i(0),\mathbf{u}_i(1),\dots,\mathbf{u}_i(T_{\mathrm{max}}-1)$, where $K+M=T_{\mathrm{max}}$.}
    \State \text{$\mathbf{v}_i\leftarrow a_i$ from DMD$(\mathbf{X}_i,\mathbf{Y}_i)$ by algorithm~\ref{alg:dmd}}
    \For{$j\in\{1,\dots,k\}$}
    \If{$\mathbf{v}_i^{(j)}>0$}
    \State $\Gamma_j\leftarrow 1$
    \Else 
    \State $\Gamma_j\leftarrow 0$
    \EndIf
    \EndFor
    \State $\text{Cluster number} \leftarrow \sum_{j=1}^k \Gamma_j 2^{j-1}$
\end{algorithmic}
\end{algorithm}

\section{Quantum-Analog Hybrid computation of spectral clustering}
We now adapt our classical algorithm such that various steps are executed on quantum or analog platforms. We now describe the details of executing the various steps on the two platforms along with associated benefits.

\subsection{ Algorithm step 1: wave evolution on graphs}
\label{sec:wave}
Conventional computers suffer from various computational challenges and inherent truncation errors in the simulation process when performing dynamics simulations. In particular, matrix multiplication, central to these computations, has a computational complexity of $\mathcal{O}(N^{2.371866})$ on digital computers \cite{Williams2023NewBF}. The algorithm with this complexity also has reduced numerical stability, compared to the na\"ive $\mathcal{O}(N^3)$ algorithm \cite{Williams2023NewBF}. On the contrary, quantum and analog computers demonstrate inherent prowess in this domain, providing an exponential advantage for simulating dynamical systems compared to their digital counterparts \cite{surana2024efficient,Luca2022optimal,gnanasekaran2023efficient}. For the specific case of wave dynamics, we can choose to simulate the equations in their continuous form on a quantum computer or after discretization on an analog device. Note that, continuous simulations on quantum computers avoids the truncation errors met in discrete simulations on analog or digital computers. We now provide details on both computing options.

\paragraph{Quantum-acceleration of wave dynamics evolution.} The wave equation on a graph~\cite{belkin2001laplacian} is equivalent to equation \ref{eqn:wave}, given as,
\begin{align}
    \frac{\mathrm{d}^2 \mathbf{u}}{\mathrm{d} t^2} = -c^2\mathbf{Lu}
\end{align}
where $\mathbf{L}$ is the graph Laplacian \cite{Friedman2004WaveEF}. In its discrete form, the wave equation on a graph is as follows,
\begin{align}
    \mathbf{u}_i(t) = 2\mathbf{u}_i(t-1)-\mathbf{u}_i(t-2)-c^2\sum_{j\in\mathcal{N}(i)}\mathbf{L}_{ij}\mathbf{u}_j(t-1),
\end{align}

where $\mathbf{u}_i$ is the $i$-th element of $\mathbf{u}$. In~\cite{zhu2022dynamic}, the authors simulated wave dynamics with $dt=1$ on a digital computer. As this simulation only involves matrix-vector multiplication, we can replace this $\mathcal{O}(N^2)$ operation on digital computers  with $\mathcal{O}(N)$ on quantum platforms~\cite{sun2022invited, zhang2016quantum} (or $\mathcal{O}(1)$ operation on analog computers).

Using the above approach, simulating the wave equation on a graph, has a time complexity of $\mathcal{O}(D^{5/2}l/\Delta l+TD^2/\Delta l)$, where $T$ is the terminal time of simulation, $D$ is the dimension of the wave, $l$ is the diameter of the region, and $\Delta l$ is the step size \cite{Costa2017QuantumAF}. Consequently, for the setting of graphs, the dimension $D$ is 2 (time and space), the diameter of the region $l$ is the number of nodes $N$, and the step size $a$ is 1, so the complexity is $\mathcal{O}(N+T)$. In practice, we simulate the wave dynamics for around $2N$ steps, leading to an $\mathcal{O}(N)$ computational complexity.
Since $T=2N$ leads to long simulation times for large graphs, the approximation (round-off) errors accumulate for the discrete system of equations. Therefore, an exact continuous simulation is favorable over their discretized counterparts. 

To implement continuous simulations of the wave dynamics that avoid truncation errors, \cite{Costa2017QuantumAF} suggests using quantum computers by exploiting native Hamiltonian simulations or the quantum linear system algorithm (QLSA). 
In our work, we evolve the following Hamiltonian system on quantum platforms,  
\begin{align}
    \frac{\mathrm{d} \mathbf{u}}{\mathrm{d} t} = -i\mathbf{H} \mathbf{u}(t),
\end{align}
for a predetermined period before measurements collapse the wavefunction. In the above equation, $\mathbf{H}$ is the Hermitian static Hamiltonian \cite{Luca2022optimal}. 
We note the original classical algorithm used the random walk version of the graph Laplacian~\cite{sahai2012hearing}. However, for quantum computations, the symmetric form of the graph Laplacian~\cite{von2007tutorial} is more amenable for Hamiltonian embeddings (see remark~\ref{remark:lsym} for more details). Here, $\mathbf{L}_{\text{sym}}$ is the symmetric normalized graph Laplacian~\cite{von2007tutorial} defined as,
\begin{align}
\label{eqn:glap_sym}
\mathbf{L}_{\text{sym}}=\mathbf{I}-\mathbf{D}^{-1/2}\mathbf{W} \mathbf{D}^{1/2},
\end{align}
where the degree matrix $\mathbf{D}=\text{diag}(\sum_{j=1}^n \mathbf{W}_{1j}, \sum_{j=1}^n \mathbf{W}_{2j}, \dots, \sum_{j=1}^n \mathbf{W}_{nj})$. 

Now one can define the symmetric Laplacian using the normalized signed incidence matrix of the graph (denoted by $\mathbf{B}$) as follows, $\mathbf{L}_{\text{sym}}=\mathbf{BB}^T$. To define $\mathbf{B}$ consider a graph with $N$ nodes and $M$ edges, then the $N\times M$ incidence matrix $\boldsymbol{\iota}$ has rows indexed by nodes $v$ and columns indexed by edges $e$. Specifically, the incidence matrix $\boldsymbol{\iota}$ is defined as,
\[\boldsymbol{\iota}_{ij}=\begin{cases}
    -1 & \text{if edge $e_j$ leaves node $v_i$,}\\
    1 & \text{if edge $e_j$ enters node $v_i$,}\\
    0 & \text{otherwise}.
\end{cases}\]
Now, the normalized incidence matrix $\mathbf{B}$ is defined as, 
\begin{equation}
    \mathbf{B} =  \mathbf{D}^{-\frac12} \boldsymbol{\iota}.
\end{equation}



Since we restrain to undirected graphs, for any given edge $e_j$, we randomly pick one node as the source node and the other as the sink node.

Using the formulation in~\cite{Costa2017QuantumAF}, if we now define a Hermitian Hamiltonian in the following block form,
\begin{align}
\label{eqn:hamiltonian}
    \mathbf{H}=c\begin{bmatrix}
        \mathbf{0} & \mathbf{B}\\
        \mathbf{B} & \mathbf{0}
    \end{bmatrix},
\end{align}
the Schr\"{o}dinger Hamiltonian system takes the form,
\begin{align}
\label{eqn:schrod}
    \frac{\mathrm{d}}{\mathrm{d} t}\begin{bmatrix}
        \mathbf{u}_1\\
        \mathbf{u}_2
    \end{bmatrix}=-ic\begin{bmatrix}
        \mathbf{0} & \mathbf{B}\\
        \mathbf{B} & \mathbf{0}
    \end{bmatrix}\begin{bmatrix}
        \mathbf{u}_1\\
        \mathbf{u}_2
    \end{bmatrix},
\end{align}
which implies,
\begin{align}
\label{eqn:schr_ham}
    \frac{\mathrm{d}^2}{\mathrm{d} t^2}\begin{bmatrix}
        \mathbf{u}_1\\
        \mathbf{u}_2
    \end{bmatrix}=-c^2\begin{bmatrix}
        \mathbf{0} & \mathbf{B}\\
        \mathbf{B} & \mathbf{0}
    \end{bmatrix}^2\begin{bmatrix}
        \mathbf{u}_1\\
        \mathbf{u}_2
    \end{bmatrix}
    =-c^2\begin{bmatrix}
        \mathbf{BB}^T & \mathbf{0}\\
        \mathbf{0} & \mathbf{B}^T\mathbf{B}
    \end{bmatrix}\begin{bmatrix}
        \mathbf{u}_1\\
        \mathbf{u}_2
    \end{bmatrix}=-c^2\begin{bmatrix}
        \mathbf{L}_{\text{sym}} & \mathbf{0}\\
        \mathbf{0} & \mathbf{B}^T\mathbf{B}
    \end{bmatrix}\begin{bmatrix}
        \mathbf{u}_1\\
        \mathbf{u}_2
    \end{bmatrix}.
\end{align}
When simulating dynamics of the Schr\"{o}dinger system using Eqn.~\ref{eqn:schr_ham}, the first $N$ terms correspond to the desired wave dynamics. When $\mathbf{u}$ has a non-zero initial derivative condition, the resulting solution tends to increase over time, leading to instability \cite{evans2022partial}. Given that:
\begin{align}
\label{eqn:wave_itr_sym}
    \frac{\mathrm{d}}{\mathrm{d} t}\begin{bmatrix}
        \mathbf{u}_1\\
        \mathbf{u}_2
    \end{bmatrix}=-ic\begin{bmatrix}
        \mathbf{0} & \mathbf{B}\\
        \mathbf{B} & \mathbf{0}
    \end{bmatrix}\begin{bmatrix}
        \mathbf{u}_1\\
        \mathbf{u}_2
    \end{bmatrix}=-ic\begin{bmatrix}
        \mathbf{Bu}_2\\
        \mathbf{Bu}_1
    \end{bmatrix},
\end{align}
the stability criteria $\frac{\mathrm{d}\mathbf{u}_1}{\mathrm{d} t}|_{t=0}=\mathbf{0}$ translates to the initial condition $\mathbf{u}_2(0)=0$. Therefore, all non-zero initial values for $\mathbf{u}_2$ lead to unstable dynamics.

\begin{remark}
\label{remark:lsym}
We again note that since the graph Laplacian form used in~\cite{sahai2012hearing, zhu2022dynamic} is the random-walk variant of the graph Laplacian $\mathbf{L}_{\text{rw}}$, it does not have a Hermitian form. Consequently, the wave dynamics based on $\mathbf{L}_{\text{rw}}$ cannot be evolved using quantum platforms. Consequently, on quantum devices, one is restricted to the symmetric graph Laplacian $\mathbf{L}_{\text{sym}}$ form defined in Eqn.~\ref{eqn:glap_sym}. Moreover, since the simulation is Hamiltonian, it is guaranteed to remain stable on quantum platforms. 
\end{remark}

\begin{remark}
In equation~\ref{eqn:schrod}, the initial state can be generated by starting from a fixed reference quantum state $\psi_{0}$ and transforming it to $|\psi_U \rangle= U|\psi_{0}\rangle$, where $U$ is generated from the uniform Haar measure~\cite{reck1994experimental,tang2022generating}. This result is an natural consequence of the results derived on the initial condition of the wave equation evolution derived in~\cite{sahai2012hearing}. Therefore, state preparation will not be a significant bottleneck in the implementation of our approach.
\end{remark}

\paragraph{Analog computer-based acceleration of wave dynamics.}
When working with the discrete form of the equation, one can exploit the favorable $\mathcal{O}(1)$ scaling of analog platforms for matrix-vector multiplication operation. Unlike the quantum setting, the $\mathbf{L}_{\text{rw}}$ update equation that is used in the classical approach~\cite{sahai2010wave,sahai2012hearing,zhu2022dynamic} can be used on analog platforms. However, in the following lemma, we prove that the wave dynamics with $\mathbf{L}_{\text{sym}}$ remain stable under the same conditions that were derived in~\cite{sahai2012hearing} for $\mathbf{L}_{\text{rw}}$, thereby significantly extending the original results for digital and analog settings.
\begin{lemma}
\label{lemma:stable_symm}
    The wave equation iteration given by Eqn.~\ref{eqn:wave_itr_sym} is stable on any graph as long as the numerical value of the wave speed $c$ satisfies $0<c<\sqrt{2}$ and the iterations have initial conditions that satisfy $\mathbf{u}_1(-1)=\mathbf{u}_1(0)$.
\end{lemma}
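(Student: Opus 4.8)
The plan is to diagonalize the iteration in the eigenbasis of $\mathbf{L}_{\text{sym}}$ and thereby reduce stability to a scalar question about a two‑term linear recurrence, exactly mirroring the template used for $\mathbf{L}_{\text{rw}}$ in \cite{sahai2012hearing}. First I would write the update rule for the $\mathbf{u}_1$ block in vector form as $\mathbf{u}_1(t) = (2\mathbf{I} - c^2\mathbf{L}_{\text{sym}})\,\mathbf{u}_1(t-1) - \mathbf{u}_1(t-2)$. Because the graph is undirected with positive edge weights, $\mathbf{L}_{\text{sym}} = \mathbf{B}\mathbf{B}^T$ is real, symmetric and positive semidefinite, so it admits an orthonormal eigenbasis $\mathbf{v}^{(1)},\dots,\mathbf{v}^{(N)}$ with real eigenvalues $0 = \lambda_1 \le \lambda_2 \le \dots \le \lambda_N$. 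Moreover $\mathbf{L}_{\text{sym}}$ is conjugate to $\mathbf{L}_{\text{rw}} = \mathbf{I} - \mathbf{D}^{-1}\mathbf{W}$ via the congruence $\mathbf{D}^{1/2}(\cdot)\mathbf{D}^{-1/2}$, hence isospectral with it; in particular $\lambda_N \le 2$, since $\mathbf{D}^{-1}\mathbf{W}$ is row‑stochastic and therefore has spectral radius $1$.

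Next I would expand $\mathbf{u}_1(t) = \sum_{k=1}^N a_k(t)\,\mathbf{v}^{(k)}$ and project the update rule onto each $\mathbf{v}^{(k)}$, which decouples it into the scalar recurrences $a_k(t) = (2 - c^2\lambda_k)\,a_k(t-1) - a_k(t-2)$. Each such recurrence has characteristic polynomial $\mu^2 - (2 - c^2\lambda_k)\mu + 1$, whose two roots multiply to $1$; hence the $k$‑th mode stays bounded for all $t$ if and only if both roots lie on the unit circle, equivalently if and only if the discriminant $(2 - c^2\lambda_k)^2 - 4$ is nonpositive, i.e. $0 \le c^2\lambda_k \le 4$.

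Then I would close the argument by plugging in the hypotheses. For $c^2 < 2$ and $\lambda_k \le 2$ we get $c^2\lambda_k < 4$ for every $k$, while $c^2\lambda_k = 0$ only for $k=1$; so for $k \ge 2$ the discriminant is strictly negative, the two roots are distinct complex conjugates of modulus $1$, and $a_k(t)$ oscillates with constant amplitude. The single exceptional mode is $\lambda_1 = 0$, where the characteristic polynomial degenerates to $(\mu-1)^2$ and the general solution is $a_1(t) = c_1 + c_2 t$; here the hypothesis $\mathbf{u}_1(-1) = \mathbf{u}_1(0)$ projects to $a_1(-1) = a_1(0)$, which forces $c_2 = 0$, so $a_1(t)$ is constant. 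Summing over $k$ and using orthonormality of the $\mathbf{v}^{(k)}$, $\|\mathbf{u}_1(t)\|$ remains bounded uniformly in $t$, which is the stated stability.

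The main obstacle — and the only place the proof is more than bookkeeping — is making the degenerate cases airtight: first, verifying that no mode other than $\lambda_1 = 0$ can touch the boundary $c^2\lambda_k \in \{0,4\}$ of the stability region, which is where the strict inequality $c < \sqrt{2}$ together with $\lambda_N \le 2$ is essential (it rules out a repeated root and hence a secular $t\,\mu^t$ term for $k \ge 2$); and second, handling the $\lambda_1 = 0$ mode, where ``stable'' must be read as bounded rather than asymptotically decaying, and the $\mathbf{u}_1(-1) = \mathbf{u}_1(0)$ condition is precisely what suppresses the linear‑in‑$t$ growth — the discrete analogue of the zero‑initial‑velocity condition. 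Everything else transfers verbatim from the $\mathbf{L}_{\text{rw}}$ analysis of \cite{sahai2012hearing}, exactly because $\mathbf{L}_{\text{sym}}$ and $\mathbf{L}_{\text{rw}}$ share the same spectrum.
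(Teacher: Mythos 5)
Your proposal is correct, and its core is the same computation the paper performs: both reduce stability to the quadratic $\mu^2-(2-c^2\lambda_k)\mu+1=0$ linking the iteration's growth rates to the eigenvalues $\lambda_k$ of $\mathbf{L}_{\text{sym}}$, and both invoke $0\le\lambda_k\le 2$ to conclude $c<\sqrt{2}$ suffices. The packaging differs slightly: the paper works with the block companion matrix $\mathbf{M}=\bigl[\begin{smallmatrix}2\mathbf{I}-c^2\mathbf{L}_{\text{sym}} & -\mathbf{I}\\ \mathbf{I} & \mathbf{0}\end{smallmatrix}\bigr]$ and derives its eigenpairs $(\rho_j,(\rho_j\mathbf{v}_j,\mathbf{v}_j)^T)$, whereas you project onto the orthonormal eigenbasis of $\mathbf{L}_{\text{sym}}$ and analyze decoupled scalar recurrences --- equivalent routes to the same characteristic polynomial. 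Where you genuinely add value is the boundary mode $\lambda_1=0$: the paper's stability criterion $c^2\lambda_j^2-4\lambda_j<0$ fails there (the discriminant vanishes, $\mu=1$ is a repeated root, and a secular $c_1+c_2t$ solution exists), and the paper's proof never actually invokes the hypothesis $\mathbf{u}_1(-1)=\mathbf{u}_1(0)$ that appears in the lemma statement. Your observation that this initial condition is exactly what forces $c_2=0$ in the zero mode --- the discrete analogue of zero initial velocity --- closes that gap and explains why the hypothesis is needed at all; you also rightly note that the strict inequalities $c<\sqrt{2}$, $\lambda_k\le 2$ exclude a repeated root for $k\ge 2$. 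Your justification of $\lambda_N\le 2$ via the similarity $\mathbf{L}_{\text{sym}}=\mathbf{D}^{1/2}\mathbf{L}_{\text{rw}}\mathbf{D}^{-1/2}$ and the row-stochasticity of $\mathbf{D}^{-1}\mathbf{W}$ is also more self-contained than the paper's citation.
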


\begin{proof}
For the $\mathbf{L}_{\text{sym}}$ case, the  dynamics of $\mathbf{z}(t) =[\mathbf{u}(t);\mathbf{u}(t-1)]$ evolves as,
\begin{align*}
    \mathbf{z}(t) &= \begin{bmatrix}
        \mathbf{u}(t)\\
        \mathbf{u}(t-1)
    \end{bmatrix}=\begin{bmatrix}
        2\mathbf{I}-c^2\mathbf{L}_{\text{sym}} & -\mathbf{I}\\
        \mathbf{I} & \mathbf{0}
    \end{bmatrix}\begin{bmatrix}
        \mathbf{u}(t-1)\\
        \mathbf{u}(t-2)
    \end{bmatrix}=\mathbf{M}\mathbf{z}(t-1) = \mathbf{M}^t \mathbf{z}(0).
\end{align*}
Let matrix $\mathbf{M}$ have eigenvector $\mathbf{m}_j = (\mathbf{m}^{(1)}_j,\mathbf{m}^{(2)}_j)^T$, where $j$ denotes the node number and the superscript splits the compound vector into two equal parts. Now, let the corresponding eigenvalue be $\rho_j$, it then follows that,

\begin{align*}
\mathbf{M}\begin{bmatrix}
    \mathbf{m}^{(1)}_j\\
    \mathbf{m}^{(2)}_j
\end{bmatrix}
&=\begin{bmatrix}
    2\mathbf{I} - c^2 \mathbf{L}_{\text{sym}} & -\mathbf{I}\\
    \mathbf{I} & \mathbf{0}
\end{bmatrix}\begin{bmatrix}
    \mathbf{m}^{(1)}_j\\
    \mathbf{m}^{(2)}_j
\end{bmatrix}=\begin{bmatrix}
    2\mathbf{m}^{(1)}_j-c^2\mathbf{L}_{\text{sym}}\mathbf{m}^{(1)}_j-\mathbf{m}^{(2)}_j\\
    \mathbf{m}^{(1)}_j
\end{bmatrix},\text{and }\;
\end{align*}
\begin{align*}
\mathbf{M}\begin{bmatrix}
    \mathbf{m}^{(1)}_j\\
    \mathbf{m}^{(2)}_j
\end{bmatrix}
=\rho_j \begin{bmatrix}
    \mathbf{m}^{(1)}_j\\
    \mathbf{m}^{(2)}_j
\end{bmatrix}
=\begin{bmatrix}
    \rho_j  \mathbf{m}^{(1)}_j\\
    \rho_j  \mathbf{m}^{(2)}_j
\end{bmatrix}.
\end{align*}

If we let $\mathbf{m}^{(2)}_j=\mathbf{v}_j$ (where $\mathbf{v}_j$ is the eigenvector of $\mathbf{L}_{\text{sym}}$),  the above equation implies that $\mathbf{m}^{(1)}_j=\rho_j \mathbf{m}^{(2)}_j=\rho_j \mathbf{v}_j$. Therefore, the eigenvector $\mathbf{m}_j$ of $\mathbf{M}$ is $\mathbf{m}_j=(\rho_j \mathbf{v}_j, \mathbf{v}_j)^T$. 

Using the definition of eigenvalues and pulling everything to the left hand side of the equation, 
\begin{align*}
    \mathbf{M}\begin{bmatrix}
        \rho_j \mathbf{v}_j\\
        \mathbf{v}_j
    \end{bmatrix} &= \rho_j \begin{bmatrix}
        \rho_j \mathbf{v}_j\\
        \mathbf{v}_j
    \end{bmatrix}\Rightarrow
    \begin{bmatrix}
        (\rho_j(2-\rho_j)\mathbf{I}-\rho_j c^2 \mathbf{L}_{\text{sym}}-\mathbf{I})\mathbf{v}_j\\
        \mathbf{0}
    \end{bmatrix}= \begin{bmatrix}
        \mathbf{0}\\
        \mathbf{0}
    \end{bmatrix}.
\end{align*}

If we now denote the eigenvalue of $\mathbf{L}_{\text{sym}}$ as $\lambda_j$, we get,
\begin{align*}
    \rho_j (2-\rho_j)\mathbf{v}_j - c^2\rho_j \mathbf{L}_{\text{sym}} \mathbf{v}_j - \mathbf{v}_j &= 0 \Rightarrow
    \rho_j^2+(c^2\lambda_j-2)\rho_j+1 = 0,
\end{align*}
so we have
\begin{align*}
    \rho_j &= \frac{-(c^2\lambda_j -2)\pm\sqrt{(c^2\lambda_j -2)^2-4}}{2}
    = \frac{2-c^2\lambda_j}{2}\pm\frac{c}{2}\sqrt{c^2\lambda_j^2-4\lambda_j}
\end{align*}
Therefore, to generate stable dynamics with $\mathbf{M}$, we need $\mathbf{L}_{\text{sym}}$'s eigenvalues $\lambda_j$'s to satisfy $c^2\lambda_j^2-4\lambda_j< 0$. If we now use that the fact that $0\leq \lambda_j\leq 2$ (see~\cite{von2007tutorial} for more details), we get $0\leq c<\sqrt{2}$. Thus, as long as the above conditions are true, the update equations remain stable on analog devices.

\end{proof}

\subsection{Algorithm step 2: Eigenvalue/vector computations}
\label{sec:eigen}
\paragraph{Analog computer-based acceleration of eigenvector computations} Once the wave equation is evolved on either analog or quantum platforms, one has to compute the eigenvalues and eigenvector components for node $l$ using the DMD approach outlined previously. We begin with a reduced SVD of matrix $\mathbf{X}$. Despite the superiority of existing quantum SVD methods ($\mathrm{poly log}$ complexity), they only work on low-rank matrices, which is not the case here \cite{PhysRevA.97.012327}. We instead focus on efficient methods for SVD that exploit matrix-vector multiplication and are, therefore, tailored for analog computers. It's important to note that the eigenvalues of $\mathbf{X}^T\mathbf{X}$, represented as $\sqrt{\gamma_i}$, correspond to the squared singular values of $\mathbf{X}$ \cite{Anthony2006AdvancedLA}. Since $\mathbf{X}^T\mathbf{X}$ is symmetric, it has distinct real eigenvalues, which can be determined using the power method. 

The power method approximates dominant eigenvalues (and their corresponding eigenvectors) of matrices. Especially powerful for large-scale matrices, this method iteratively refines estimates of the dominant eigenvector, capitalizing on the property that repeated multiplication accentuates the contribution of the dominant eigenvalue. The following recurrence relation describes the power method:

\begin{align}
    \mathbf{b}_{k+1} = \frac{\mathbf{X}^T\mathbf{X}\mathbf{b}_k}{\|\mathbf{X}^T\mathbf{X}\mathbf{b}_k\|}.
\end{align}
Using the eigenvector $\mathbf{b}_k$, the estimation for the dominant eigenvalue is the Rayleigh quotient 
\begin{align}
    \gamma_k = \frac{\mathbf{b}_k^T\mathbf{X}^T\mathbf{X}\mathbf{b}_k}{\mathbf{b}_k^T\mathbf{b}_k} \text{\cite{Anthony2006AdvancedLA}}.
\end{align}

To calculate other eigenvalues and eigenvectors using the power method, we need to remove the dominant eigenvalue while preserving the spectrum. Deflation methods such as Wielandt's and Hotelling's deflation provide such tools \cite{Saad2011NumericalMF,NIPS2008_85d8ce59}. In particular, Hotelling's deflation states that

\begin{lemma}
    If $\psi_1\geq\psi_2\geq\dots\geq\psi_N$ are the eigenvalues of matrix $\,\mathcal{B}$, and $\boldsymbol{\nu}^{(1)},\boldsymbol{\nu}^{(2)},\dots,\boldsymbol{\nu}^{(N)}$ are the corresponding eigenvectors, define $\hat{\mathcal{B}}=\mathcal{B}-\boldsymbol{\nu}^{(j)}\boldsymbol{\nu}^{(j)T}\mathcal{B}\boldsymbol{\nu}^{(j)}\boldsymbol{\nu}^{(j)T}$ for some $j\in\{1,2,\dots,N\}$. Then $\hat{\mathcal{B}}$ has eigenvectors $\boldsymbol{\nu}^{(1)},\boldsymbol{\nu}^{(2)},\dots,\boldsymbol{\nu}^{(N)}$ with corresponding eigenvalues $\psi_1,\dots,\psi_{j-1},0,\psi_{j+1},\dots,\psi_N$ \cite{NIPS2008_85d8ce59}.
\end{lemma}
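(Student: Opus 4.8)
The plan is to exploit symmetry. In the setting where this lemma is applied, $\mathcal{B}$ is symmetric (it arises as $\mathbf{X}^T\mathbf{X}$), so its eigenvectors $\boldsymbol{\nu}^{(1)},\dots,\boldsymbol{\nu}^{(N)}$ may be taken to form an orthonormal basis of $\mathbb{R}^N$, i.e. $\boldsymbol{\nu}^{(i)T}\boldsymbol{\nu}^{(k)}=\delta_{ik}$; I would state this normalization hypothesis explicitly at the outset. The first step is then to simplify the definition of $\hat{\mathcal{B}}$: since $\boldsymbol{\nu}^{(j)}$ is a unit eigenvector, the scalar $\boldsymbol{\nu}^{(j)T}\mathcal{B}\boldsymbol{\nu}^{(j)}$ is precisely the Rayleigh quotient and equals $\psi_j$, so that
\begin{align}
\hat{\mathcal{B}} = \mathcal{B} - \psi_j\,\boldsymbol{\nu}^{(j)}\boldsymbol{\nu}^{(j)T}.
\end{align}

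The second step is to verify the claimed spectrum by direct substitution on the known eigenbasis. For an index $k\neq j$, applying $\hat{\mathcal{B}}$ to $\boldsymbol{\nu}^{(k)}$ gives $\hat{\mathcal{B}}\boldsymbol{\nu}^{(k)} = \mathcal{B}\boldsymbol{\nu}^{(k)} - \psi_j\,\boldsymbol{\nu}^{(j)}\bigl(\boldsymbol{\nu}^{(j)T}\boldsymbol{\nu}^{(k)}\bigr) = \psi_k\boldsymbol{\nu}^{(k)} - \mathbf{0}$ by orthogonality, so $\boldsymbol{\nu}^{(k)}$ remains an eigenvector of $\hat{\mathcal{B}}$ with the unchanged eigenvalue $\psi_k$. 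For $k=j$, $\hat{\mathcal{B}}\boldsymbol{\nu}^{(j)} = \mathcal{B}\boldsymbol{\nu}^{(j)} - \psi_j\,\boldsymbol{\nu}^{(j)}\bigl(\boldsymbol{\nu}^{(j)T}\boldsymbol{\nu}^{(j)}\bigr) = \psi_j\boldsymbol{\nu}^{(j)} - \psi_j\boldsymbol{\nu}^{(j)} = \mathbf{0}$, so $\boldsymbol{\nu}^{(j)}$ is an eigenvector with eigenvalue $0$. Since $\boldsymbol{\nu}^{(1)},\dots,\boldsymbol{\nu}^{(N)}$ are linearly independent, these $N$ eigenpairs exhaust the spectrum of $\hat{\mathcal{B}}$, which is exactly the assertion.

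I do not expect a genuine obstacle: the core of the argument is a single matrix--vector computation. The only point requiring care is making the hypotheses precise — namely that $\mathcal{B}$ be symmetric so the eigenvectors can be chosen orthonormal (automatic across distinct eigenvalues, and arrangeable within a repeated eigenspace), and that $\boldsymbol{\nu}^{(j)}$ be normalized so that $\boldsymbol{\nu}^{(j)T}\mathcal{B}\boldsymbol{\nu}^{(j)}=\psi_j$. It is the orthogonality relation $\boldsymbol{\nu}^{(j)T}\boldsymbol{\nu}^{(k)}=0$ for $k\neq j$ that guarantees the remaining eigenvalues are left untouched, so I would flag that explicitly. For completeness I would also remark that $\hat{\mathcal{B}} = \mathcal{B} - \psi_j\,\boldsymbol{\nu}^{(j)}\boldsymbol{\nu}^{(j)T}$ is again symmetric, which is what justifies iterating the power method on the deflated matrix.
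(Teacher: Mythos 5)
Your proof is correct: it is the standard one-line verification of Hotelling's deflation (reduce $\hat{\mathcal{B}}$ to $\mathcal{B}-\psi_j\,\boldsymbol{\nu}^{(j)}\boldsymbol{\nu}^{(j)T}$ via the Rayleigh quotient, then check each $\boldsymbol{\nu}^{(k)}$ using orthonormality), which is exactly the argument behind the cited reference; the paper itself states the lemma with a citation and gives no proof, so there is no alternative in-paper argument to compare against. Your explicit flagging of the hypotheses (symmetry of $\mathcal{B}$, unit-norm and mutually orthogonal eigenvectors) is appropriate, since the lemma as stated omits them even though they hold in the paper's application with $\mathcal{B}=\mathbf{X}^T\mathbf{X}$.
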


Hotelling's deflation maintains the full spectrum of a matrix, nullifying a chosen eigenvalue while keeping the rest intact. Implementation of Hotelling's deflation is straightforward with just matrix-vector multiplications. Due to the approximation error inherent in the power method, Hotelling's deflation does not \emph{exactly} eliminate the eigenvalues, causing numerical instabilities for large matrices. Schur's deflation method mitigates these challenges and remains stable despite the numerical errors in the eliminated eigenvalues. When the elimination is exact, Schur's deflation reduces to Hotelling's deflation \cite{NIPS2008_85d8ce59}. The power method struggles to determine smaller eigenvalues due to numerical issues. Since we exploit the reduced SVD (where we disregard exceedingly small singular values), these numerical issues do not affect the efficiency of our algorithm \cite{Anthony2006AdvancedLA}.

Let us define the SVD of $\mathbf{X}$ as $\mathbf{X}=\mathbf{U}\mathbf{\Sigma}\mathbf{V}^T$. The eigendecomposition with the power method and Hotelling's deflation provides us the singular values of $\mathbf{X}$ (given by the diagonal of $\mathbf{\Sigma}$) as well as the eigenvectors of $\mathbf{X}^T\mathbf{X}$ (given by the columns of $\mathbf{V}$). When performing a reduced SVD, we keep only $M\ll N$ singular values and columns of $\mathbf{V}$. We can calculate the reduced $\mathbf{U}$ using the pseudoinverse of $\mathbf{V}^T_{\mathrm{red}}$, given by
\begin{align}
    \mathbf{U}_{\mathrm{red}}\approx \mathbf{X}\mathbf{V}^{T+}_{\mathrm{red}}\mathbf{\Sigma}^{-1}_{\mathrm{red}},
\end{align}
where $\mathbf{V}^{T+}_{\mathrm{red}}$ is the pseudoinverse of $\mathbf{V}^T_{\mathrm{red}}$. Both quantum and analog computers provide efficient methods for computing the pseudoinverse \cite{sun2022invited,Wiebe2012QuantumD} and in our approach either can be used. 

The subsequent step within DMD involves calculating the eigenvalues of $\Tilde{\mathbf{A}} = \mathbf{U}^*\mathbf{YV\Sigma}^{-1}$. The power method is inapplicable in the setting of general unsymmetric matrices such as $\Tilde{\mathbf{A}}$. However, given that $\Tilde{\mathbf{A}}$ is of dimensions $\mathbb{R}^{M\times M}$ and $M\ll N$,  eigendecomposition techniques are not impacted by large $N$ values. Eigendecomposition on analog computers has a scaling of $\mathcal{O}(M\log M)$~\cite{sun2022invited, 8914709, pan2024blockamcscalableinmemoryanalog}. Although quantum algorithms provide the same scaling, existing quantum algorithms are restricted to matrices with specific properties, such as a Hermitian structure and diagonalizability~\cite{Nghiem_2023, 10.1145/3527845, qpca}. Consequently, in our approach, we exploit analog platforms for these computations due to their flexibility. As quantum algorithms develop further, it is possible that future variants of this algorithm will rely of quantum devices.

\subsection{Algorithm step 3: Solving linear systems}
\label{sec:linear}
\paragraph{Quantum-accelerated solutions of linear systems.} As mentioned previously, the final step within the step of DMD computations is to solve a system of linear equations of the form $\hat{\mathbf{\Phi}}\hat{\mathbf{a}}=\mathbf{x}(0)$. Here, $\hat{\mathbf{a}}$ provides the coefficients whose signs are used to assign the nodes to corresponding clusters. Quantum computers excel at this task, with time complexity of $\mathcal{O}(\text{poly}(\log N, \kappa))$, where $N$ and $\kappa$ are the dimension and condition number of $\hat{\mathbf{\Phi}}$. In contrast, classical algorithms have a scaling of $\mathcal{O}(N\sqrt{\kappa})$ (we refer the reader to~\cite{harrow2009quantum} for further details on solving linear systems on quantum platforms). 

\subsection{Summary of steps}
In summary, the algorithm steps and their associated scaling on the various computing platforms is shown in table~\ref{table:steps}. This table represents the best known scaling of the current generation of algorithms. As new algorithms for quantum and analog platforms are developed, the benefits and drawbacks associated with executing various steps of spectral clustering on these platforms are expected to evolve.
\begin{table}[hbt!]
\begin{tabular}{l|lll}
                             & Digital                           & Analog                            & Quantum                      \\ \hline
Simulate Wave Dynamics       & $\mathcal{O}(N^2 N_t)$             & $\mathcal{O}(N_t)$        & $\mathcal{O}(N)$ \\
Reduced SVD                  & $\mathcal{O}(M^2 N + M^3)$ & $\mathcal{O}(KM)$ & $\mathcal{O}(\mathrm{poly }\log N)$                          \\
Eigendecomposition           & $\mathcal{O}(M^3)$                & $\mathcal{O}(M\log M)$                              & $\mathcal{O}(\mathrm{poly }\log N)$                         \\
Matrix-vector multiplication & $\mathcal{O}(N^2)$                & $\mathcal{O}(1)$                  & $\mathcal{O}(N)$                          \\
Solving linear system        & $\mathcal{O}(N^3)$                & $\mathcal{O}(\log N)$             & $\mathcal{O}(\log N)$       
\end{tabular}
\caption{A summary of the computational complexities of various components of the clustering algorithm. $H$ is the Hermitian static Hamiltonian, $N$ is the number of nodes, $N_t$ is the number of time steps, $K$ is the number of iterations taken for the power method, $M$ is the number of components left in reduced SVD, $T$ is terminal time \cite{harrow2009quantum, sun2022invited, li2019tutorialsvd}. The complexity of quantum matrix-vector multiplication listed above is adapted from quantum matrix multiplication algorithms, which have a complexity of $\mathcal{O}(N^2)$ ~\cite{shao2018quantumalgorithmsmatrixmultiplication}. \cite{Mizuno_2024} proposes a quantum version of the DMD algorithm to achieve an $\mathcal{O}(\mathrm{poly }\log N)$ complexity for eigendecomposition.}
\label{table:steps}
\end{table}

\begin{figure*}[hbt!]
    \centering
    \includegraphics[width=0.6\textwidth, trim=25pt 10pt 80pt 20pt]{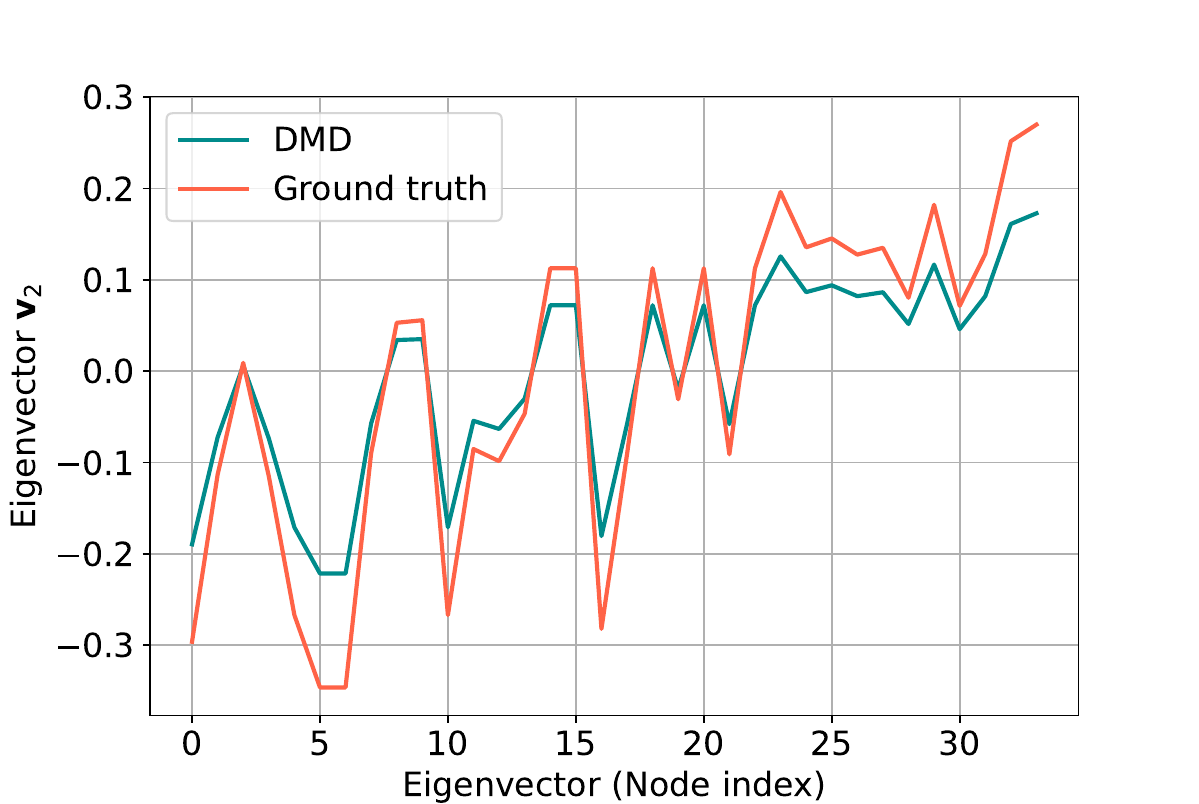}
    \caption{The estimated and the ground truth eigenvector of the adjacency matrix of the Karate club network. Eigenvector (Node index) }
    \label{fig:karate}
\end{figure*}

\section{Results}
\label{sec:result}
We implemented our proposed algorithm using the IBM Qiskit package and performed numerical experiments to verify that its results match those of classical algorithms. Our computations are performed on quantum emulators since the current generation of quantum computers do not have enough qubits to handle meaningful examples. 

First, as shown in Figure \ref{fig:karate}, we experiment with a benchmark called Zachary's karate club graph \cite{Girvan_2002}. Zachary's Karate Club graph is a widely studied social network representing the relationships amongst a karate club's 34 members at a US university in the 1970s. In the graph, each node represents a member of the karate club, and each edge indicates a tie between two members outside of the club activities. The network is undirected and unweighted, reflecting the mutual nature of social relationships. We simulated the wave dynamics using Qiskit for the time interval $t=[0,99]$, measuring the dynamics at integer times \cite{qiskit_dynamics_2023}. As expected, the eigenvector estimated by the proposed method is a constant factor of the ground truth (see Fig.~\ref{fig:karate}). Since the ground truth and computed eigenvectors share consistent signs for assigning nodes to clusters, the estimated clusters exactly match.

Second, we test the performance of our method on the Twitter interaction network for the US Congress, which represents the Twitter interaction network of both the House of Representatives and Senate that formed the 117\textsuperscript{th} the United States Congress \cite{fink2023twitter, fink2023centrality}. As shown in Figure \ref{fig:twitter}, our method correctly classifies 467 out of 475 nodes, thereby achieving an accuracy of $98.32\%$. This discrepancy arises due to the numerical error of simulating quantum dynamics using the Qiskit dynamics package. We also test the proposed method on the social circle graph on Facebook \cite{NIPS2012_7a614fd0}. In particular, we take $200$ nodes from the dataset. As shown in figure~\ref{fig:facebook}, our method correctly clusters $196$ out of the $200$ nodes, achieving an accuracy of $98\%$. We expect $100\%$ accuracy of cluster assignments on an error-corrected (fault tolerant) quantum platform. We note that the error of computing the eigenvector components (from the Schr\"{o}dinger dynamics data) was below the threshold required to perform accurate cluster assignments and did not contribute to the error rates.

To numerically verify the stability of the wave equation simulated with the $\mathbf{L}_\text{sym}$ (proved in lemma~\ref{lemma:stable_symm}), we plot the wave dynamics used for clustering on the Twitter dataset in Fig.~\ref{fig:qiskitWave}. Here, it is easy to see that the dynamics at the second node remains bounded for all time, as should be expected in the evolution of Hamiltonian systems. Similar evolution plots can be generated at all nodes in the graph. Note that the Qiskit dynamics are not expected to match the discrete equation evolution since the governing equations for both the cases are different.   

\begin{figure*}[hbt!]
    \centering
    \includegraphics[width=0.6\textwidth, trim=50pt 20pt 30pt 10pt]{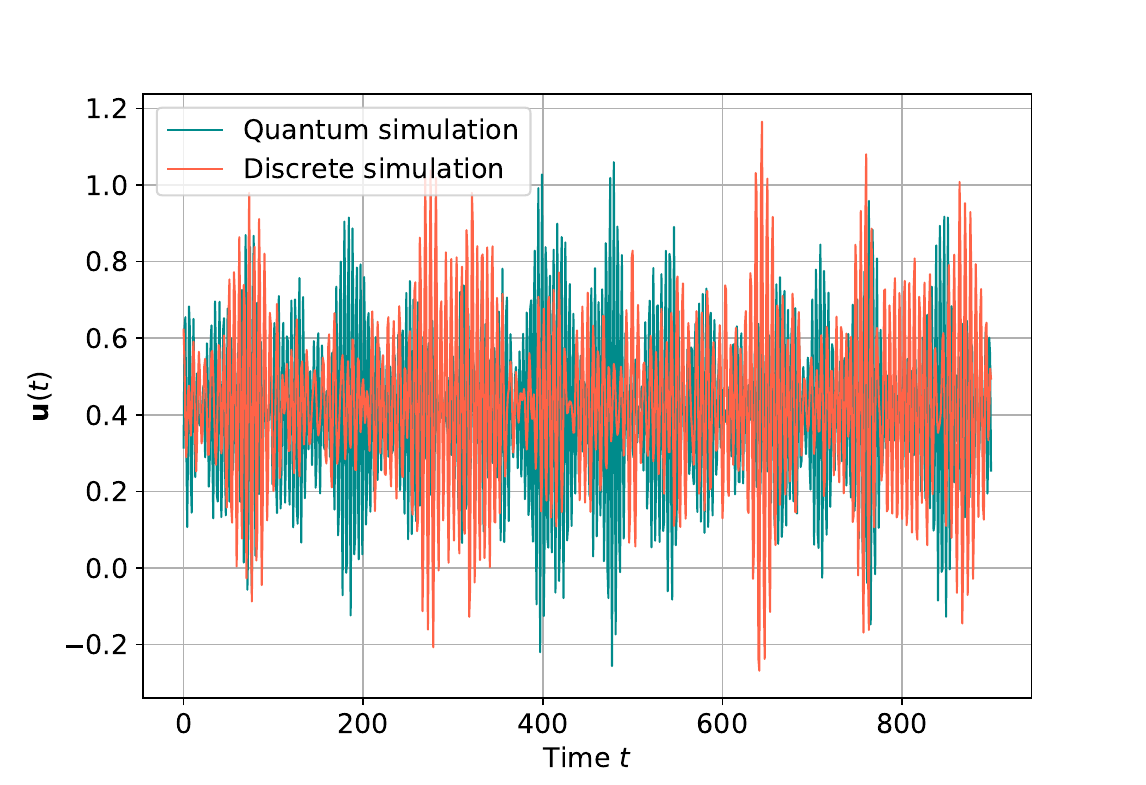}
    \caption{The wave dynamics of the second node on the Twitter dataset simulated with Qiskit dynamics verify the stability of quantum-simulated dynamics.}
    \label{fig:qiskitWave}
\end{figure*}

\begin{figure*}[hbt!]
    \centering
    \begin{subfigure}[t]{0.45\textwidth}
    \centering
    \includegraphics[width=\textwidth, trim=100pt 40pt 10pt 10pt]{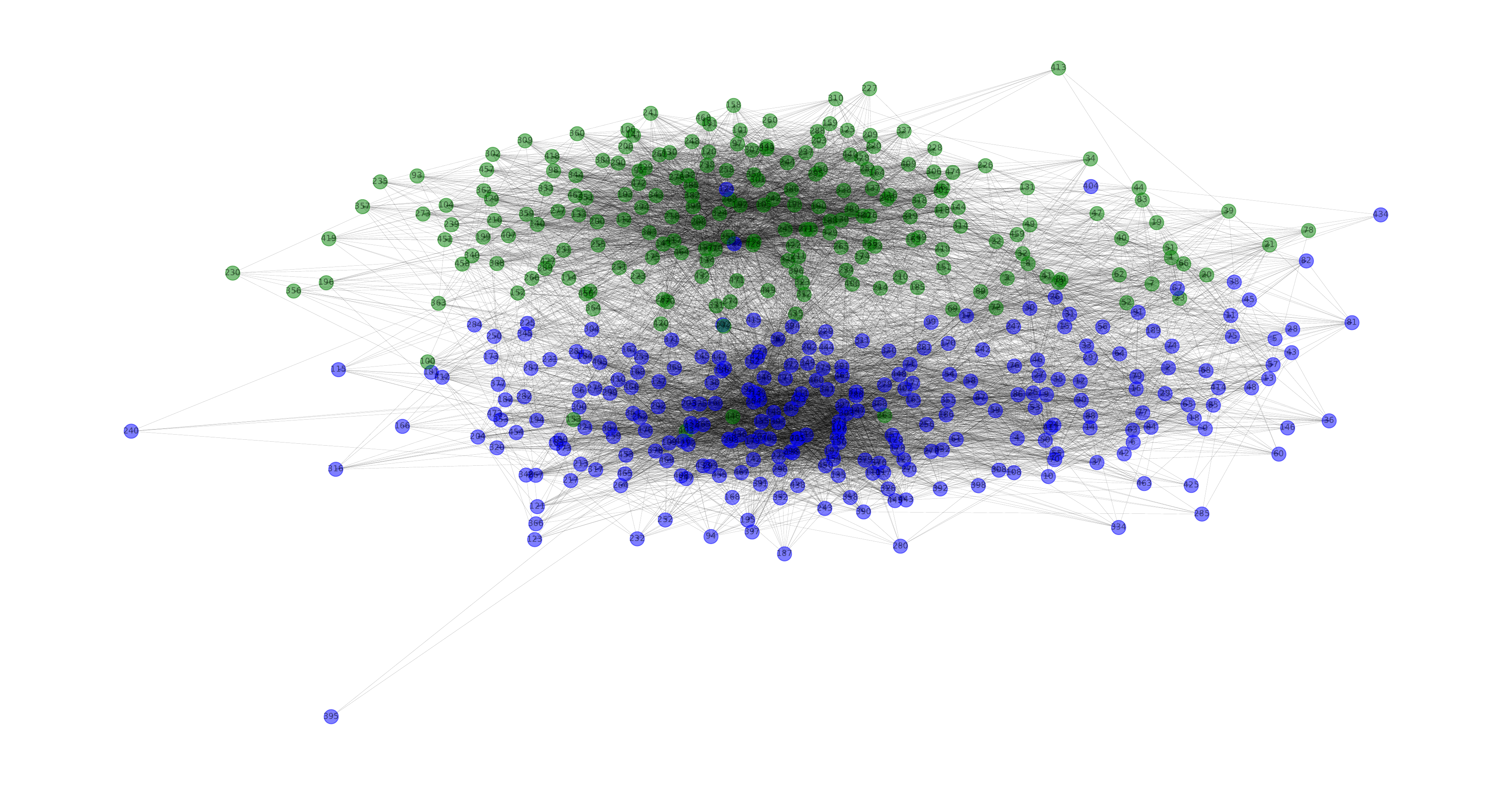}
    \caption{The clusters calculated according to the eigenvector estimated with the proposed method.}
    \label{fig:twitter_dmd}
    \end{subfigure}
    \begin{subfigure}[t]{0.45\textwidth}
    \centering
    \includegraphics[width=\textwidth, trim=30pt 40pt 80pt 10pt]{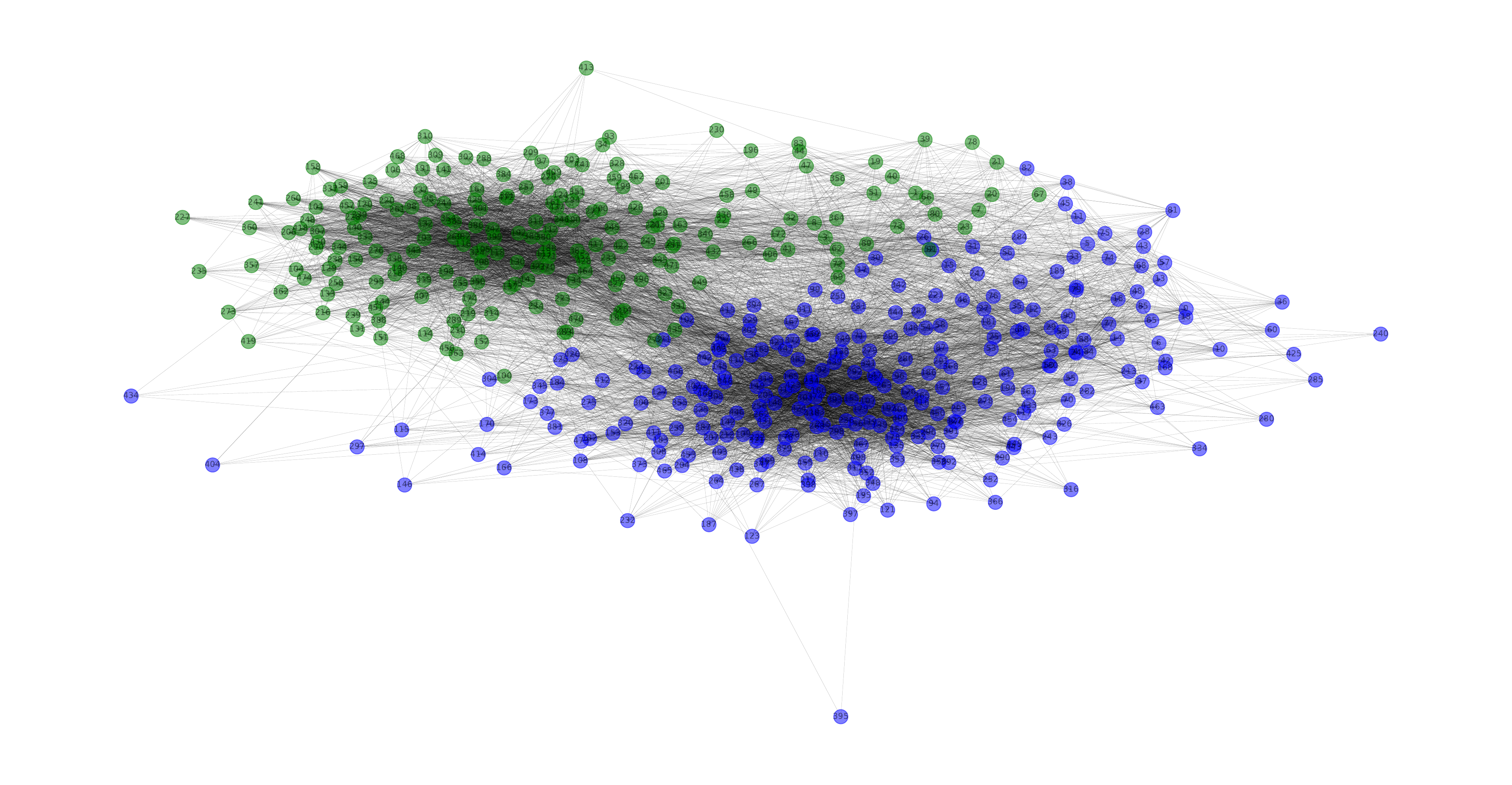}
    \caption{Ground Truth cluster calculated with spectral clustering.}
    \label{fig:twitter_gt}
    \end{subfigure}
    \caption{Clusters of the Twitter interaction network for the US Congress, comparing the ground truth and the proposed method's results.} 
    \label{fig:twitter}
\end{figure*}

\begin{figure*}[hbt!]
    \centering
    \includegraphics[width=1.0\textwidth, trim={0pt 0pt 0pt 0pt}, clip]{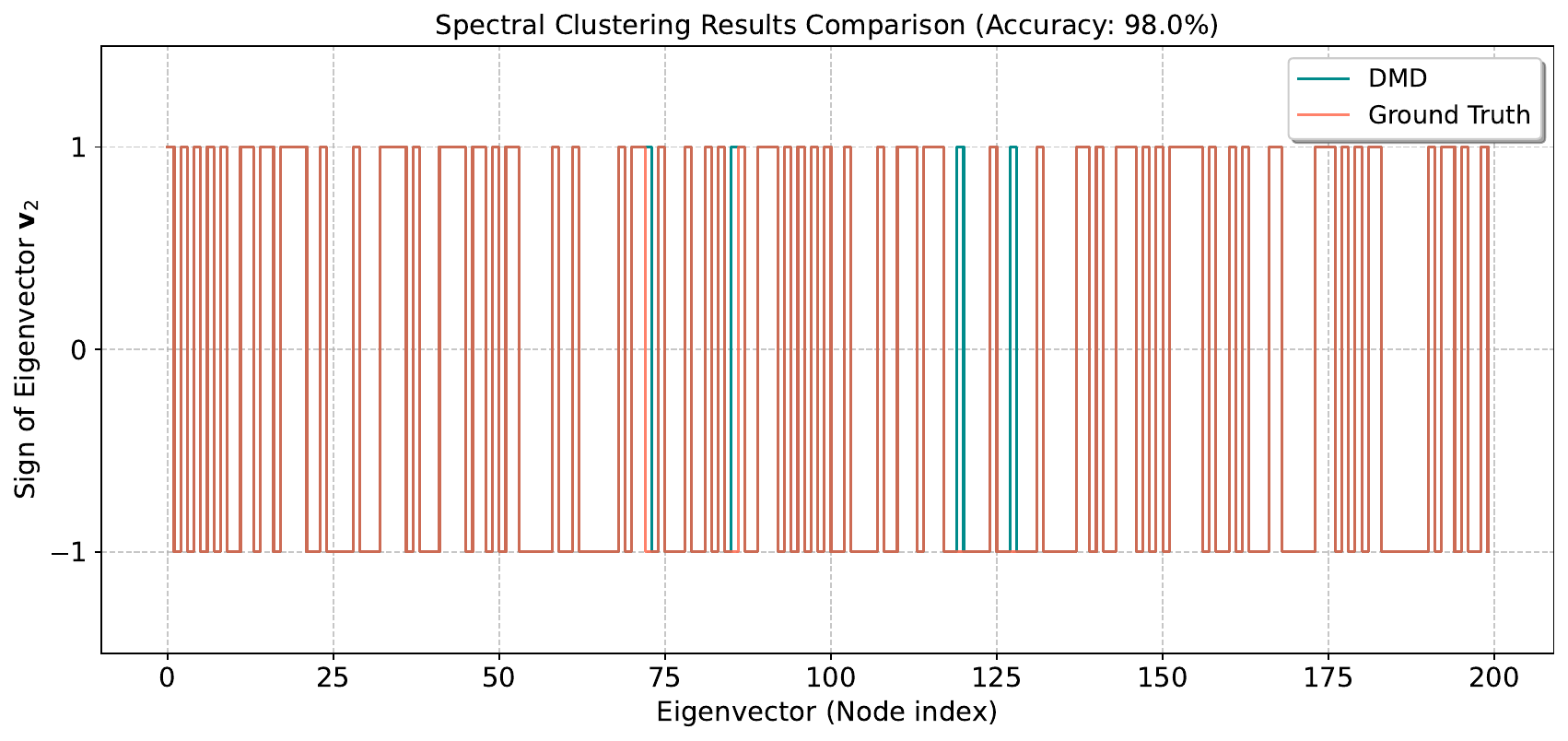}
    \caption{Comparison of the ground truth spectral clustering results and from the proposed quantum-analog framework on $200$ nodes from the Facebook social network graph.}
    \label{fig:facebook}
\end{figure*}

Lastly, we test the proposed method on a synthetic network with eighty nodes and four clusters. The network is generated such that it contains four clusters of nodes, where the nodes within each cluster are closely connected, and the four clusters themselves have weak interaction. In this case, as can be seen in figure~\ref{fig:synthetic}, our method perfectly estimates the clusters with $100\%$ accuracy. 
All results can be reproduced with our codebase at \url{https://github.com/XingziXu/quantum-analog-clustering}.

\begin{figure*}[hbt!]
    \centering
    \begin{subfigure}[t]{0.45\textwidth}
    \centering
    \includegraphics[width=\textwidth, trim=70pt 25pt 10pt 10pt]{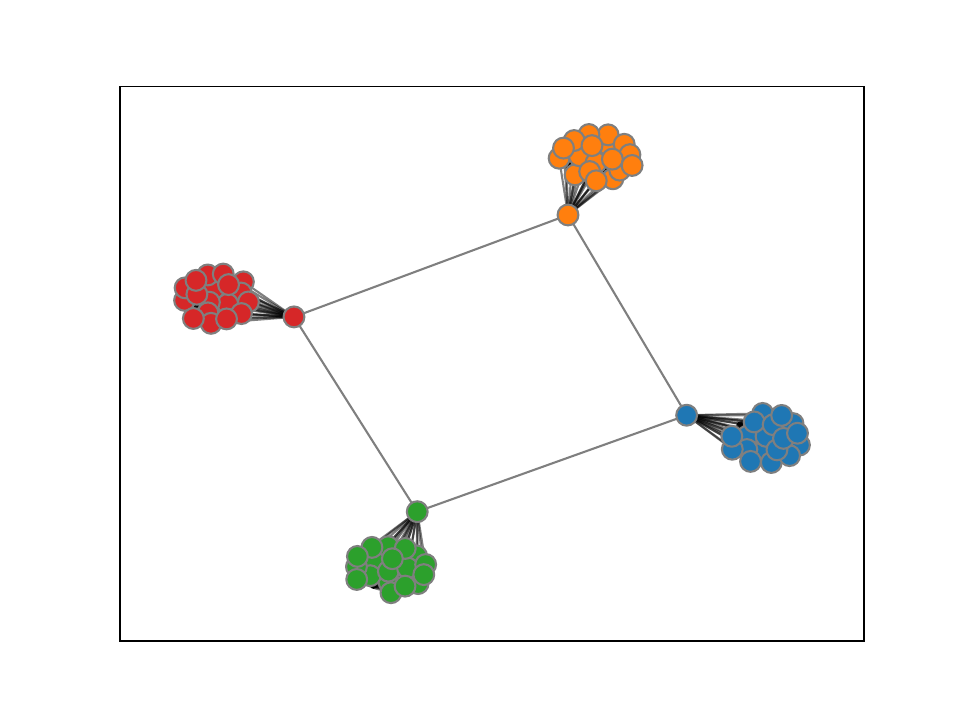}
    \caption{The clusters calculated according to the eigenvector estimated with the proposed method.}
    \label{fig:synthetic_clust}
    \end{subfigure}
    \begin{subfigure}[t]{0.45\textwidth}
    \centering
    \includegraphics[width=\textwidth, trim=30pt 15pt 100pt 30pt]{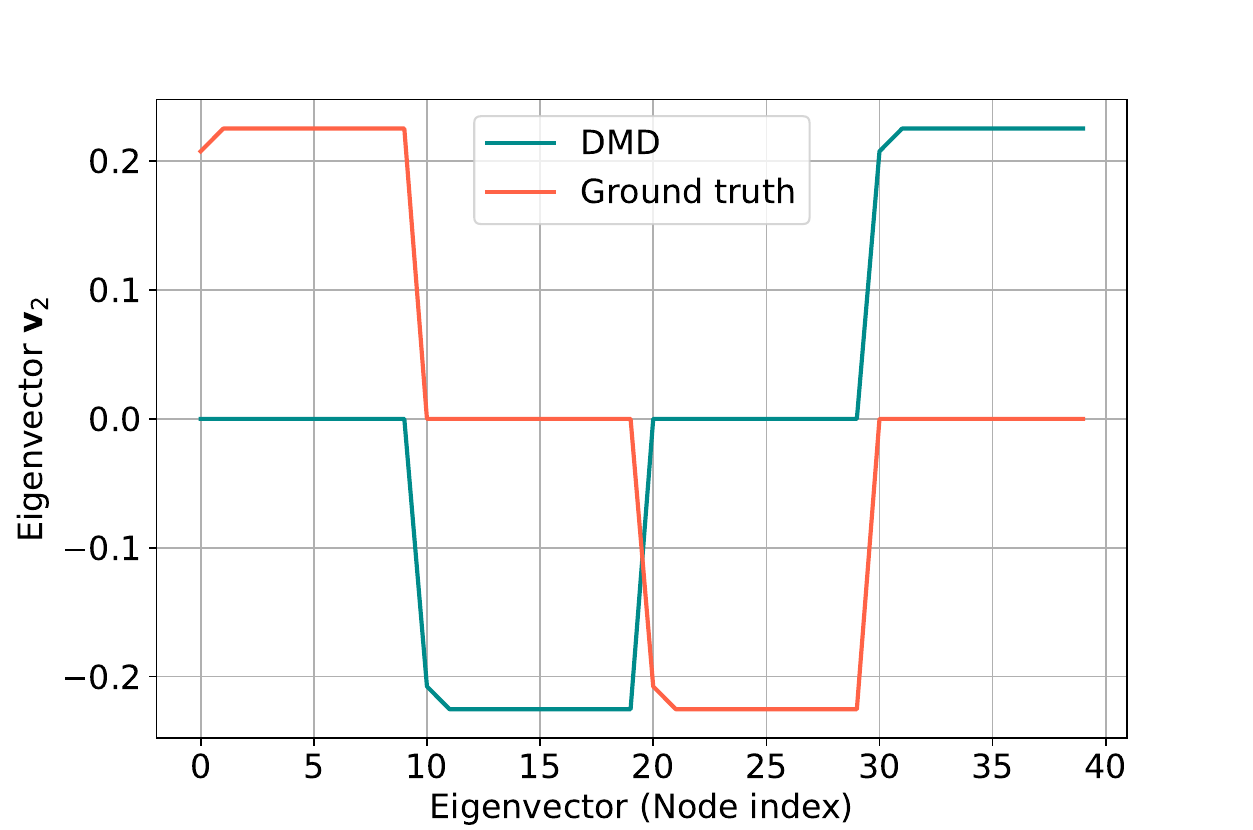}
    \caption{The estimated and the ground truth eigenvector of the adjacency matrix of the network.}
    \label{fig:synthetic_eig}
    \end{subfigure}
    \caption{Estimated clusters of the synthetic network and the comparison of the estimated and ground truth eigenvector.} 
    \label{fig:synthetic}
\end{figure*}

\section{Conclusion}\label{sec:conclusion}
We propose a first-of-a-kind quantum-analog hybrid algorithm to cluster large-scale graphs. The algorithm contains two  components that are sequentially applied. In the first component, we simulate wave dynamics on a graph using analog (discretized dynamics) or quantum (continuous dynamics) computers. Both analog and quantum platforms accelerate the computations by a polynomial factor. However, we expect fault-tolerant quantum computers to provide a significant advantage over analog computers for the task of simulating wave dynamics due to their inherent lack of discretization error, which provides a key advantage of robustness for large graphs. In the second component, we analyze the data from the wave dynamics step using dynamic mode decomposition on analog platforms. In this step, we achieve polynomial speedups over digital computers, arising from the $\mathcal{O}(1)$ complexity of matrix-vector multiplication on analog computers. Overall, we accelerate the existing $\mathcal{O}(N^3)$ complexity on digital computers to $\mathcal{O}(N)$ by using the combination of quantum and analog platforms. We demonstrate our proposed algorithm on diverse benchmark datasets, including the Zachary Karate club example, Twitter interaction networks, Facebook social circles, and random graphs by reproducing the clusters in these graphs with over $98\%$ accuracy. The error rate is due to the non-native simulation of the Schr\"odinger dynamics and expected to be rectified when running these computations on quantum devices. However, in practice, the devices will have an error rate that will influence the accuracy of the approach. Performing sensitivity analysis of the approach concerning error rates in the analog and quantum portions of the computation is subject to future work.

As both of the computing paradigms mature, we expect our method to become increasingly attractive for practical applications. Currently, quantum computers are limited in the number and fidelity of qubits. Similarly, analog devices are still under development and exist only as experimental prototypes. Consequently, we have used classical emulators for both platforms to perform the above experiments. 

We expect that our method will be particularly useful in the analysis of very large graphs that require dynamic updates to the partitioning. For example, the approach will be useful for partitioning of proliferated satellite networks (for task allocation), formation of teams within heterogeneous autonomous systems, analysis of cellular networks, and biological graphs such as metabolic and genetic networks. The primary drawback of the approach is that it requires access to large-scale quantum and analog devices that can be coupled to one another. Moreover, we assume that the computations are fault-tolerant and error-corrected. Such platforms are not expected to be portable in the near future, restricting their deployment on mobile agents.

Given that the development of algorithms that leverage both analog and quantum computing devices is at a nascent stage, we anticipate that several existing and new algorithms will leverage a combination of these paradigms. However, a comprehensive understanding of the properties of problems and algorithms that make them attractive for these settings remains elusive.

More broadly, this work demonstrates that embeddings of discrete problems in continuous spaces enable the construction of efficient algorithms on emerging computing devices. Given deep connections between dynamical systems and combinatorial optimization problems~\cite{sahai2020dynamical}, we expect that novel combinations of digital, analog, and quantum platforms will provide a unique opportunity for developing novel and efficient algorithms that exploit the inherent strength of each platform. As these computing platforms mature, we expect an increase in the development of such algorithms that exploit the dynamic nature of these problems and devices. These embeddings also provide deep insights into the fundamental limitations of algorithm construction and related complexity classes~\cite{sahai2024emergence}. The use of these embeddings for computational complexity analysis of problem classes on emerging comouting platforms remains an underexplored area of research.

\section*{Acknowledgments}
The authors thank SRI International for financial support for the work under Internal Research \& Development (IRAD) project number IRHOME.A.2023.SNAR.0000.

\section*{Author Contributions}

TS and XX conceived and designed the experiments. XX performed the experiments. XX and TS analyzed the data. TS and XX contributed to theoretical analysis. XX and TS wrote the paper. All authors reviewed and approved the final manuscript.

\section*{Funding Statement}

This research was sponsored by SRI International under internal research and development project [project number: IRHOME.A.2023.SNAR.0000]. This research received no external funding.

\section*{Conflict of Interest}

The authors declare no conflict of interest.

\section*{Competing Interest}

The authors declare no competing interest.

\section*{Data Availability}

The data to the Zachary's Karate Club graph can be found at \url{https://networkx.org/documentation/stable/auto_examples/graph/plot_karate_club.html};
The data to the Twitter interaction network for the US Congress can be found at \url{https://snap.stanford.edu/data/congress-twitter.html}.
\bibliographystyle{unsrt}  
\bibliography{references}

\end{document}